\documentclass[letterpaper]{article} 
\usepackage[preprint]{aaai2027}  
\usepackage[hyphens]{url}  
\usepackage{graphicx} 
\usepackage{natbib}  
\usepackage{caption} 
\usepackage{algorithm}
\usepackage{algorithmic}

\usepackage{booktabs}

\usepackage{amssymb}
\usepackage{mathtools}
\usepackage{amsthm}
\usepackage{amsbsy}
\usepackage{dsfont} 
\newcommand{\mathbbm}[1]{\mathds{#1}} 
\usepackage{multirow}

\theoremstyle{plain}
\newtheorem{theorem}{Theorem}[section]

\theoremstyle{definition}

\newtheorem{assumption}[theorem]{Assumption}
\theoremstyle{remark}

\DeclareMathOperator*{\argmax}{arg\,max}

\newcommand{\E}{\mathbb{E}}

\newcommand{\veps}{\varepsilon}

\def\ddefloop#1{\ifx\ddefloop#1\else\ddef{#1}\expandafter\ddefloop\fi}
\def\ddef#1{\expandafter\def\csname bb#1\endcsname{\ensuremath{\mathbb{#1}}}}
\ddefloop ABCDEFGHIJKLMNOPQRSTUVWXYZ\ddefloop
\def\ddef#1{\expandafter\def\csname b#1\endcsname{\ensuremath{\mathbf{#1}}}}
\ddefloop ABCDEFGHIJKLMNOPQRSTUVWXYZ\ddefloop
\def\ddef#1{\expandafter\def\csname c#1\endcsname{\ensuremath{\mathcal{#1}}}}
\ddefloop ABCDEFGHIJKLMNOPQRSTUVWXYZ\ddefloop

\title{Exponential Reward Weighting for Fine-Tuning Generative Recommenders under Sparse and Noisy Feedback}
\author{
    Keertana Chidambaram\textsuperscript{\rm 1}\thanks{Work done while KC was an intern at Netflix Research.},
    Sanath Kumar Krishnamurthy\textsuperscript{\rm 2}\equalcontrib,
    Qiuling Xu\textsuperscript{\rm 3}\equalcontrib,\\
    Ko-Jen Hsiao\textsuperscript{\rm 3},
    Moumita Bhattacharya\textsuperscript{\rm 3}
}
\affiliations{
    \textsuperscript{\rm 1}Department of Management Science \& Engineering, Stanford University\\
    \textsuperscript{\rm 2}Meta\\
    \textsuperscript{\rm 3}Netflix Research\\
    vck@stanford.edu, sanathkk@meta.com, qxu@netflix.com, kojenh@netflix.com, moumitab@netflix.com
}

\begin{document}

\maketitle

\begin{abstract}
In recommendation systems, users interact with only a small fraction of a vast item catalog, producing feedback that is both sparse and noisy. This challenges post-training generative recommenders: reward models trained from logged interactions often fail to generalize, while directly optimizing imperfect rewards can lead to reward over-optimization. We propose Exponential reward-weighted fine-tuning (Exp-RSFT), where each logged interaction is weighted by $\exp(r/\lambda)$, avoids this failure by optimizing directly on the logged rewards, with the temperature $\lambda$ regularizing against their noise. We theoretically show that Exp-RSFT’s suboptimality decomposes into two costs: a coverage cost arising from limitations of the logging policy and a noise cost from imperfect feedback. The temperature $\lambda$ balances these competing effects, yielding an optimal tradeoff between exploiting high-reward behavior and robustness to noise. Across three public benchmarks and a large-scale industrial dataset, we verify this theoretical prediction: performance follows an inverted-U trend as a function of $\lambda$, while PPO and DPO often over-optimize unreliable reward models and degrade recommendation quality. Exp-RSFT consistently improves ranking performance without requiring online exploration or preference data.
\end{abstract}

\section{Introduction}
Generative recommenders cast recommendation as sequential generation over a user's interaction history, mirroring next-token prediction in LLMs~\cite{yang2025gr, rajput2023recommender, zhai2024actions, deng2025onerec}. Architectures such as SASRec~\cite{kang2018self}, HSTU~\cite{zhai2024actions}, and OneRec~\cite{deng2025onerec} are trained by behavior cloning, which imitates high-value engagements (genuine enjoyment) and low-value ones (accidental clicks, click-bait) indiscriminately.

Reinforcement Learning from Human Feedback (RLHF)~\cite{ouyang2022training, stiennon2020learning} shows that, for LLMs, post-training on real user feedback improves outcomes beyond what behavior cloning alone can achieve. Recommendation is a natural fit for this paradigm because organic feedback is available at scale through ratings, reviews, watch time, and re-engagement, which is precisely the signal needed to distinguish high-value from low-value engagements. However, adapting RLHF to generative recommenders introduces unique challenges:

\begin{itemize}
    \item \textbf{Sparse user interactions.} In recommendation settings with large catalogs, users interact with only a tiny fraction of the available items, and item representations are learned purely from user behavior with no semantic grounding~\cite{dulac2015deep}. A reward model fit on such sparse data must extrapolate across a vast item space, and generalizes poorly. During post-training, the policy exploits the resulting prediction errors, known as \emph{reward hacking} \cite{amodei2016concrete}, systematically selecting items for which the reward model is unreliable rather than those users actually prefer.

    \item \textbf{Noisy rewards.} Even the observed rewards are imperfect measures of preference: ratings for the same item vary across sessions, clicks are affected by position bias, and implicit signals fluctuate for reasons unrelated to true preference. Any method that optimizes directly on logged rewards must therefore guard against overfitting to this noise.

    \item \textbf{No online exploration.} Industrial datasets are pre-collected and static, so the policy cannot query users for new feedback mid-training. Standard RLHF pipelines~\cite{ouyang2022training, stiennon2020learning} sidestep this by training a reward model as a simulator, but in our setting that simulator inherits the generalization failures above, reintroducing the very problem we set out to avoid.

    \item \textbf{Unknown logging policy.} Rewards are observed only for items previously shown to the user by the deployed (logging) policy, inducing selection bias. Inverse Propensity Scoring (IPS) could correct this in principle, by re-weighting examples by their probability of being shown, but the logging policy is typically too complex~\cite{covington2016deep} or inaccessible~\cite{liang2022local} to estimate, and IPS weights have extreme variance~\cite{swaminathan2015batch, dudik2011doubly}, especially when the logging policy is near-deterministic.
\end{itemize}

To overcome these challenges, we adopt \textbf{Exponential Reward-Weighted SFT (Exp-RSFT)} where we weight each logged trajectory $\tau$ by $\exp(R(\tau)/\lambda)$, where $R(\tau)$ is the observed cumulative reward of the trajectory and $\lambda > 0$ is a temperature parameter. Because Exp-RSFT never queries a learned reward model, it cannot reward hack; it needs no propensity scores; and it is fully offline. The algorithm itself is well known~\cite{peters2007using, wang2018exponentially, liang2022local}; our contribution is a theoretical account of when and why it succeeds in this regime, together with experiments confirming the theory's predictions:

\begin{itemize}
    \item \textbf{Theoretical sub-optimality decomposition.} We prove that the policy learned by Exp-RSFT competes with any alternative policy one might hope to imitate, including the optimal one, and that its suboptimality against the optimal policy splits into two interpretable terms: a coverage cost that grows only logarithmically with the inverse probability of the optimal trajectory under the logging policy, and a noise cost that grows only logarithmically with the catalog size. The analysis is trajectory-level, requires no reward model or value function, and recovers the contextual bandit as the special case $T = 1$.

    \item \textbf{Temperature as a trade-off parameter.} The two costs move in opposite directions in $\lambda$, while coverage favors small temperatures, noise robustness favors large ones. Their sum is minimized at a closed-form $\lambda^\star$, so the theory predicts an inverted-U in performance as $\lambda$ varies.

    \item \textbf{Experiments validating theory.} Across three public benchmarks and a large-scale proprietary dataset, temperature sweeps trace the predicted inverted-U, and Exp-RSFT outperforms behavior cloning, linear reward weighting (RSFT), PPO, and DPO; the gains over RSFT isolate the temperature as the operative ingredient. Consistent with the regime our theory targets, the learned reward model fails to outperform naive baselines in our setup, and PPO and DPO collapse by over-optimizing it.
\end{itemize}

\section{Related Literature}

\textbf{Post-training generative recommenders.} Standard post-training methods include RLHF~\cite{ouyang2022training, stiennon2020learning}, GRPO~\cite{shao2024deepseekmath}, and DPO~\cite{rafailov2023direct}. RLHF and GRPO sample outputs from the current policy and score them with a reward model, verifier, or simulator; in recommendation there is no ground-truth correctness to verify and no live simulator of user responses, so the score must come from a learned reward model that extrapolates across a vast catalog from sparse observations, inviting the reward hacking we document in Section~\ref{sec:experiments}. DPO sidesteps reward-model training but requires (preferred, dispreferred) pairs, which scalar feedback such as ratings and watch time does not provide; constructing pairs by ranking candidates with a reward model reintroduces the same problem. Beyond the reward-modeling problem, logged data only contains feedback on items the logging policy chose to show. Classical offline policy learning corrects for this with inverse propensity scores~\cite{swaminathan2015batch, dudik2011doubly} or learns conservative value functions~\cite{kumar2020conservative}, but propensities are unavailable under complex production logging policies, and value functions must again generalize across the catalog. Behavior cloning needs none of these but cannot improve on the logging policy. This leaves methods that optimize directly on logged rewards while staying close to the logging policy as the natural candidates for our regime.

\textbf{Exponential reward weighting.} The algorithm we analyze is well established: RWR~\cite{peters2007using}, MARWIL~\cite{wang2018exponentially}, IQL~\cite{kostrikov2021offline}, CRR~\cite{wang2020critic}, and LPI~\cite{liang2022local} all weight log-likelihoods by monotonic, typically exponential, functions of returns or advantages. IQL, CRR, and MARWIL build advantages from learned value functions; RWR weights by returns directly, as we do, avoiding value-function estimation altogether (Section~\ref{sec:background}). Relative to the linear reward weighting of \citet{mukherjeeoffline}, the exponential form adds the temperature $\lambda$ as an explicit robustness knob and avoids the negative-weight pathology of linear schemes in likelihood-based objectives~\cite{steck2010training, schnabel2016recommendations}. Existing analyses of this family address optimization stability, convergence, and distribution shift in value-function estimation; none characterizes behavior under reward noise or quantifies the cost of limited coverage in large action spaces. Providing exactly this characterization is the contribution of Section~\ref{sec:theory}.

\textbf{Reward weighting in LLM post-training.} RLOO~\cite{ahmadian2024back} samples multiple outputs per prompt and weights by the group-relative reward, and ALoL~\cite{baheti2023leftover} requires propensity scores and a learned value function; neither transfers to our setting, where fresh feedback cannot be sampled and propensities are unknown.

\section{Background}
\label{sec:background}

\subsection{Problem Setting}
\label{sec:setup}
A user's interaction history serves as the initial context $s_1 \sim d_0$. Starting from $s_1$, a policy generates a \emph{trajectory} of $T$ recommendations: at step $t$ it selects an item $a_t \in \mathcal{A}$ given the current state $s_t$, and the state updates deterministically by appending the selected item, $s_{t+1} = (s_t, a_t)$. A policy $\pi$ thus induces a distribution over trajectories $\tau = (s_1, a_1, \ldots, s_T, a_T)$, written equivalently in product or summation form:
\begin{align}
\pi(\tau \mid s_1) &= \prod_{t=1}^{T} \pi(a_t \mid s_t), \nonumber \\
\log \pi(\tau \mid s_1) &= \sum_{t=1}^{T} \log \pi(a_t \mid s_t).
\label{eq:traj_policy}
\end{align}
Each trajectory receives a scalar reward $R(\tau)$ summarizing the user's feedback over the recommendation session. We treat $R$ as an arbitrary functional of the trajectory: it may be computed from feedback observed at each recommendation step, intermediate feedback collected throughout the trajectory, terminal session-level feedback, or any other evaluation metric. Our analysis only requires that the resulting feedback can be represented as a scalar reward associated with each trajectory. We assume access to an offline dataset $\mathcal{D} = \{\tau_i\}_{i=1}^{n}$ of trajectories with observed rewards, collected under an unknown logging (behavior) policy $\pi_\beta$. The value of a policy from context $s$ is
\begin{equation}
V^{\pi}(s) = \mathbb{E}_{\tau \sim \pi(\cdot \mid s)}\!\left[R(\tau)\right].
\end{equation}
The contextual bandit setting is the special case $T = 1$, where $\tau = (s, a)$ and $R(\tau) = r(s, a)$.

\subsection{Exponential Reward-Weighted Policy Optimization}
\label{sec:exp_rw}

Following \citet{nair2020awac}, we seek a policy that maximizes expected reward while remaining close to the data-generating distribution:
\begin{align}\label{eq:constrained_opt}
&\max_{\pi}\; \E_{s \sim d_0,\, \tau \sim \pi(\cdot|s)}\!\left[ R(\tau) \right] \\
&\text{s.t.} \quad \E_{s \sim d_0}\!\left[ D_{\mathrm{KL}}\!\left( \pi(\cdot|s) \,\|\, \pi_\beta(\cdot|s) \right) \right] \leq \veps, \nonumber
\end{align}
where the optimization is over distributions on trajectories. Forming the Lagrangian with multiplier $\lambda$ and setting the functional derivative to zero yields the closed-form solution
\begin{equation}
\pi^*(\tau|s) = \frac{1}{Z(s)}\, \pi_\beta(\tau|s) \exp\!\left( \frac{R(\tau)}{\lambda} \right),
\label{eq:reward_weighted_policy}
\end{equation}
where $Z(s) = \sum_{\tau'} \pi_\beta(\tau' \mid s)\exp\!\left(R(\tau')/\lambda\right)$ is the partition function, with the summation taken over all trajectories $\tau'$ originating from the initial state $s$. Equation~\eqref{eq:reward_weighted_policy} can be viewed as an exponential tilt of the behavior policy toward high-reward trajectories. Since generative recommenders define autoregressive distributions over trajectories, the tilted distribution $\pi^*$ remains within the same autoregressive policy class.
To obtain a parametric policy $\pi_\theta$, we project $\pi^*$ onto the policy class by minimizing $D_{\mathrm{KL}}(\pi^* \| \pi_\theta)$. Rewriting the expectation under $\pi_\beta$ via the density ratio $\pi^*/\pi_\beta$ and dropping the partition function as in \citet{nair2020awac} gives the weighted maximum-likelihood objective
\begin{equation}
\theta^* = \argmax_{\theta}\; \E_{\tau \sim \mathcal{D}} \left[ \exp\!\left( \frac{R(\tau)}{\lambda} \right) \sum_{t=1}^{T} \log \pi_\theta(a_t \mid s_t) \right],
\label{eq:rw_objective}
\end{equation}
where the inner sum is the trajectory log-likelihood from Equation~\eqref{eq:traj_policy}. This is Algorithm~\ref{alg:exp-rw-sft}: standard supervised fine-tuning in which each logged trajectory is weighted by its exponentiated, temperature-scaled reward. It requires no reward model, no propensity scores, and no knowledge of $\pi_\beta$; the single hyperparameter $\lambda$ controls regularization strength. In the next section, we analyze the behavior of exponential reward weighting under noisy reward observations. We derive high-probability improvement guarantees relative to arbitrary comparator policies and characterize how the temperature $\lambda$ balances reward optimization, noise sensitivity, and offline data coverage.

\begin{algorithm}[tb]
\caption{Exponential Reward-Weighted SFT (Exp-RSFT)}
\label{alg:exp-rw-sft}
\begin{algorithmic}
\STATE {\bfseries Input:} Offline dataset $\mathcal{D} = \{\tau_i\}_{i=1}^{n}$ with rewards $R(\tau_i)$, initial policy $\pi_{\theta}$, temperature $\lambda > 0$, epochs $E$
\FOR{$e = 1, \ldots, E$}
    \FOR{each mini-batch $B \subset \mathcal{D}$}
        \STATE $\mathcal{L}(\theta) \gets -\frac{1}{|B|} \displaystyle\sum_{\tau \in B} \exp\!\left(\frac{R(\tau)}{\lambda}\right) \sum_{t=1}^{T} \log \pi_\theta(a_t \mid s_t)$
        \STATE Update: $\theta \gets \theta - \eta \nabla_\theta \mathcal{L}(\theta)$
    \ENDFOR
\ENDFOR
\STATE {\bfseries Output:} Fine-tuned policy $\pi_{\theta}$
\end{algorithmic}
\end{algorithm}

\section{Theoretical Analysis}
\label{sec:theory}
We analyze exponential reward weighting from three perspectives: improvement over the behavior policy, robustness to noisy user feedback, and the trade-off between reward optimization and offline data coverage. Throughout, we fix a context $s$ and let $\mathcal{T}$ denote the set of trajectories reachable from $s$, with $|\mathcal{T}| \leq |\mathcal{A}|^T$.

We first consider the idealized setting where the dataset provides the true rewards $R^*$. Following the policy-improvement argument of \citet{wang2018exponentially}, the reward-weighted policy $\pi^*_{\lambda}(\tau|s) \propto \pi_\beta(\tau|s)\exp\!\left(R^*(\tau)/\lambda\right)$ satisfies $V^{\pi^*_{\lambda}}(s) \geq V^{\pi_\beta}(s)$. Thus, in the absence of reward noise, exponential weighting improves over the behavior policy by shifting probability mass toward higher-reward trajectories.

In practice, however, user feedback is noisy: clicks are affected by presentation effects, ratings vary with user context, and engagement signals contain stochastic variation unrelated to preference. We therefore study the more realistic setting where the observed reward is a noisy estimate of the underlying reward.

\begin{assumption}[Sub-Gaussian Reward Noise]
\label{ass:noise}
The observed reward satisfies $\hat{R}(\tau) = R^*(\tau) + \xi(\tau)$, where $R^*$ is the expected reward and $\xi(\tau)$ is a zero-mean $\sigma$-sub-Gaussian random variable for every trajectory $\tau$.
\end{assumption}

This assumption captures common feedback signals in recommendation systems. Bounded feedback such as clicks or ratings satisfies the assumption through Hoeffding's lemma, while continuous feedback with approximately Gaussian variation is also covered. Importantly, the assumption is imposed only on the final trajectory reward: intermediate rewards may be correlated arbitrarily, as long as their aggregation produces a sub-Gaussian trajectory-level error.

Let $\pi_\lambda(\tau|s) \propto \pi_\beta(\tau|s)\exp\!\left(\hat{R}(\tau)/\lambda\right)$ denote the policy obtained from noisy rewards. Rather than comparing $\pi_\lambda$ only against $\pi_\beta$, we analyze performance against an arbitrary comparator policy $\tilde{\pi}$. This allows the learned policy to be evaluated against any target policy, including an optimal policy, while explicitly accounting for how far that target lies from the offline data distribution.

\begin{theorem}[Policy Improvement Under Noisy Rewards]
\label{thm:noisy_improvement}
Fix a context $s$. Under Assumption~\ref{ass:noise}, with probability at least $1-\delta$, for every comparator policy $\tilde{\pi}$,
\begin{equation}
V^{\pi_\lambda}(s) \;\geq\; V^{\tilde{\pi}}(s)
\;-\; \lambda\, D_{\mathrm{KL}}\!\left( \tilde{\pi}(\cdot|s) \,\|\, \pi_\beta(\cdot|s) \right)
\;-\; 2\epsilon,
\label{eq:comparator_bound}
\end{equation}
where $\epsilon := \sigma\sqrt{2\log(2|\mathcal{T}|/\delta)}$.
\end{theorem}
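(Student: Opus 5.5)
The approach is to combine a uniform concentration bound on the noise with the Gibbs variational principle for the exponentially tilted policy. The key observation is that $\pi_\lambda$ exactly maximizes a KL-regularized objective in which the noisy reward $\hat R$ replaces $R^*$. The only cost of the noise is then the discrepancy between $\hat R$ and $R^*$, which enters twice: once for the comparator and once for $\pi_\lambda$.

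\textbf{Step 1 (uniform noise control).} By Assumption~\ref{ass:noise}, each $\xi(\tau)$ is $\sigma$-sub-Gaussian, so $\Pr(|\xi(\tau)|>t)\le 2\exp(-t^2/(2\sigma^2))$. Taking a union bound over the at most $|\mathcal{T}|$ trajectories reachable from $s$ and setting $t=\epsilon=\sigma\sqrt{2\log(2|\mathcal{T}|/\delta)}$ gives the following: with probability at least $1-\delta$, $|\hat R(\tau)-R^*(\tau)|\le\epsilon$ simultaneously for all $\tau\in\mathcal{T}$. All remaining steps are deterministic on this event. Because the event does not depend on $\tilde\pi$, the conclusion holds for every comparator at once.

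\textbf{Step 2 (variational characterization).} For any distribution $q$ on $\mathcal{T}$, define
\[
J(q)=\E_{\tau\sim q}[\hat R(\tau)]-\lambda D_{\mathrm{KL}}(q\,\|\,\pi_\beta(\cdot|s)).
\]
A direct computation with $\pi_\lambda=\pi_\beta e^{\hat R/\lambda}/\hat Z$ gives
\[
J(q)=\lambda\log\hat Z-\lambda D_{\mathrm{KL}}(q\,\|\,\pi_\lambda).
\]
Hence $\pi_\lambda$ maximizes $J$, and $J(\pi_\lambda)\ge J(\tilde\pi)$. If $D_{\mathrm{KL}}(\tilde\pi\|\pi_\beta)=\infty$, the claimed bound is trivial, so we may assume $\tilde\pi\ll\pi_\beta$, which makes everything finite.

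\textbf{Step 3 (chaining).} Writing $\hat V^{\pi}=\E_{\tau\sim\pi}[\hat R(\tau)]$ and $D_\pi=D_{\mathrm{KL}}(\pi(\cdot|s)\|\pi_\beta(\cdot|s))$, Step 2 gives
\[
\hat V^{\pi_\lambda}\ \ge\ \hat V^{\pi_\lambda}-\lambda D_{\pi_\lambda}\ \ge\ \hat V^{\tilde\pi}-\lambda D_{\tilde\pi},
\]
where the first inequality uses $D_{\pi_\lambda}\ge0$. On the event from Step 1, $V^{\pi_\lambda}(s)\ge\hat V^{\pi_\lambda}-\epsilon$ and $\hat V^{\tilde\pi}\ge V^{\tilde\pi}(s)-\epsilon$, since both are averages of pointwise $\epsilon$-close rewards. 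Combining these yields \eqref{eq:comparator_bound}, including the factor $2\epsilon$.

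\textbf{Main obstacle.} The only delicate point is making the high-probability statement hold uniformly over all comparators and over the data-dependent $\pi_\lambda$. This is why the concentration is taken pointwise over the finite set $\mathcal{T}$ rather than over policy values. It also accounts for the $\log|\mathcal{T}|\le T\log|\mathcal{A}|$ dependence, i.e.\ a noise cost that is logarithmic in the catalog size. The Gibbs identity in Step 2 is routine.
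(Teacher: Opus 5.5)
Your proposal is correct and follows essentially the same route as the paper's proof. Both use a union bound over $\mathcal{T}$ to get $|\xi(\tau)|\le\epsilon$ uniformly, then the Gibbs identity $J(q)=\lambda\log\hat Z-\lambda D_{\mathrm{KL}}(q\,\|\,\pi_\lambda)$ showing $\pi_\lambda$ maximizes the KL-regularized noisy objective, then drop the nonnegative term $\lambda D_{\mathrm{KL}}(\pi_\lambda\,\|\,\pi_\beta)$ and convert noisy to true values at a cost of $\epsilon$ per side, with the infinite-KL case and uniformity over comparators handled as in the paper.
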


Theorem~\ref{thm:noisy_improvement} separates the performance gap into two interpretable terms. The first term, $\lambda\, D_{\mathrm{KL}}(\tilde{\pi} \,\|\, \pi_\beta)$, is the \emph{coverage cost}: offline learning cannot reliably imitate policies that place probability mass far outside the logged data. The second term, $2\epsilon$, is the \emph{noise cost} induced by stochastic user feedback. Since $\log |\mathcal{T}| \leq T\log|\mathcal{A}|$, the noise penalty scales as $O(\sigma\sqrt{T\log|\mathcal{A}|})$, which grows only logarithmically with the catalog size.

The previous theorem treats $\lambda$ only as a regularization parameter. When rewards are bounded, we can characterize how $\lambda$ explicitly controls the noise--coverage trade-off.

\begin{theorem}[Temperature-Dependent Comparator Bound]
\label{thm:temp_bound}
Assume $R^*(\tau) \in [0, R_{\max}]$. Under the conditions of Theorem~\ref{thm:noisy_improvement}, with probability at least $1-\delta$, for every comparator policy $\tilde{\pi}$:
\begin{align}
V^{\pi_\lambda}(s) \;\geq\; V^{\tilde{\pi}}(s)
&\;-\; \lambda\, D_{\mathrm{KL}}\!\left( \tilde{\pi}(\cdot|s) \,\|\, \pi_\beta(\cdot|s) \right) \nonumber \\
&\;-\; R_{\max}\!\left( e^{2\epsilon/\lambda} - 1 \right).
\label{eq:temp_comparator_bound}
\end{align}
For $\lambda \geq 2\epsilon$, $\; R_{\max}\!\left(e^{2\epsilon/\lambda} - 1\right) \leq \frac{4R_{\max}\epsilon}{\lambda}$.
\end{theorem}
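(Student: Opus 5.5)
The plan is to work on the same high-probability event as in Theorem~\ref{thm:noisy_improvement}, compare $\pi_\lambda$ directly with the noiseless tilt $\pi^*_\lambda(\tau|s) \propto \pi_\beta(\tau|s)\exp(R^*(\tau)/\lambda)$, and then compare $\pi^*_\lambda$ with $\tilde{\pi}$. All values are taken with respect to the true reward $R^*$.

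\textbf{Step 1 (good event).} Each $\xi(\tau)$ is zero-mean $\sigma$-sub-Gaussian, so $P(|\xi(\tau)|>\epsilon)\le 2e^{-\epsilon^2/(2\sigma^2)}$. With $\epsilon=\sigma\sqrt{2\log(2|\mathcal{T}|/\delta)}$ and a union bound over $\tau\in\mathcal{T}$, the event $\mathcal{E}=\{|\xi(\tau)|\le\epsilon\ \ \forall\tau\in\mathcal{T}\}$ has probability at least $1-\delta$. Everything below is deterministic on $\mathcal{E}$, so it holds simultaneously for all comparators $\tilde{\pi}$.

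\textbf{Step 2 (density ratio).} Write $\hat Z,Z^*$ for the two partition functions. Then
\[
\frac{\pi_\lambda(\tau|s)}{\pi^*_\lambda(\tau|s)} = e^{\xi(\tau)/\lambda}\,\frac{Z^*}{\hat Z}.
\]
On $\mathcal{E}$ we have $e^{-\epsilon/\lambda}Z^*\le\hat Z\le e^{\epsilon/\lambda}Z^*$. Hence the ratio lies in $[e^{-2\epsilon/\lambda},e^{2\epsilon/\lambda}]$ for every $\tau$. This is where the assumption $R^*\ge 0$ enters: pointwise $\pi_\lambda(\tau|s)R^*(\tau)\ge e^{-2\epsilon/\lambda}\pi^*_\lambda(\tau|s)R^*(\tau)$. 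Summing over $\tau$ gives $V^{\pi_\lambda}(s)\ge e^{-2\epsilon/\lambda}V^{\pi^*_\lambda}(s)$. Using $0\le V^{\pi^*_\lambda}\le R_{\max}$, this becomes
\[
V^{\pi_\lambda}(s)\ge V^{\pi^*_\lambda}(s)-R_{\max}\bigl(1-e^{-2\epsilon/\lambda}\bigr)\ge V^{\pi^*_\lambda}(s)-R_{\max}\bigl(e^{2\epsilon/\lambda}-1\bigr),
\]
since $1-e^{-x}\le e^{x}-1$ for $x\ge0$.

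\textbf{Step 3 (comparator via Gibbs variational principle).} The tilt $\pi^*_\lambda$ maximizes $\E_{\tau\sim\pi}[R^*(\tau)]-\lambda D_{\mathrm{KL}}(\pi\|\pi_\beta)$ over distributions on $\mathcal{T}$; the maximum equals $\lambda\log Z^*$. Therefore
\[
V^{\pi^*_\lambda}(s)\ge V^{\pi^*_\lambda}(s)-\lambda D_{\mathrm{KL}}(\pi^*_\lambda\|\pi_\beta)\ge V^{\tilde{\pi}}(s)-\lambda D_{\mathrm{KL}}(\tilde{\pi}\|\pi_\beta).
\]
Combining this with Step 2 yields \eqref{eq:temp_comparator_bound}.

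\textbf{Step 4 (linearization).} For $\lambda\ge2\epsilon$, set $x=2\epsilon/\lambda\in[0,1]$. By convexity of $e^x-1$ on $[0,1]$ and $e-1\le2$, we get $e^x-1\le(e-1)x\le 2x$. This gives $R_{\max}(e^{2\epsilon/\lambda}-1)\le 4R_{\max}\epsilon/\lambda$.

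The main obstacle is Step 2: converting the multiplicative control of the density ratio into an additive value bound. This requires the nonnegativity of $R^*$ (so the pointwise inequality can be summed in the right direction) and the upper bound $R_{\max}$ (to turn the multiplicative loss into an additive one).
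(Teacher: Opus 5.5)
Your proof is correct and follows the paper's overall structure. Both arguments use the same union-bound event and the noiseless tilt $\pi^*_\lambda$ as an intermediary. Both use the same sandwich $\pi_\lambda/\pi^*_\lambda\in[e^{-2\epsilon/\lambda},e^{2\epsilon/\lambda}]$ obtained from bounds on $Z^*/\hat Z$, the Gibbs variational inequality for the comparator, and the chord bound $e^x-1\le 2x$ on $[0,1]$.

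You diverge only in how the sandwich becomes a value gap. The paper bounds $|V^{\pi^*_\lambda}(s)-V^{\pi_\lambda}(s)|\le 2R_{\max}D_{\mathrm{TV}}(\pi^*_\lambda,\pi_\lambda)\le R_{\max}(e^{2\epsilon/\lambda}-1)$. That controls the gap in both directions and uses only $|R^*|\le R_{\max}$, not its sign. You instead use $R^*\ge 0$ to sum the lower half of the sandwich, which gives $V^{\pi_\lambda}(s)\ge e^{-2\epsilon/\lambda}V^{\pi^*_\lambda}(s)$. This is one-sided, which is all the theorem needs, but it is sharper. Your intermediate loss $R_{\max}(1-e^{-2\epsilon/\lambda})$ is at most $2R_{\max}\epsilon/\lambda$ for every $\lambda>0$. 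That halves the constant and removes the $\lambda\ge 2\epsilon$ restriction, before you deliberately relax it to the stated form.

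Nonnegativity costs no generality. Both tilts, and the gap $V^{\pi_\lambda}(s)-V^{\pi^*_\lambda}(s)$, are unchanged when a constant is added to $R^*$, so any reward of range $R_{\max}$ can be shifted into $[0,R_{\max}]$. Carried into Theorem~\ref{thm:coverage_tradeoff}, your version would give the bound $2\sqrt{2R_{\max}\epsilon\log(1/p^\star(s))}$ at $\lambda=\sqrt{2R_{\max}\epsilon/\log(1/p^\star(s))}$. It would also remove any need to check that this $\lambda$ is at least $2\epsilon$, a condition the paper's optimization implicitly requires.
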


Theorem~\ref{thm:temp_bound} reveals the role of temperature. Increasing $\lambda$ makes the policy more conservative, increasing the comparator cost linearly while reducing sensitivity to reward noise at rate $O(1/\lambda)$. Conversely, smaller $\lambda$ enables stronger exploitation of high-reward trajectories but amplifies reward noise.

Finally, we choose the comparator policy $\tilde{\pi}$ to be the optimal policy for context $s$, which converts the general comparator bound into a characterization of the suboptimality relative to the highest-reward trajectory. Let $\tau^\star(s) \in \argmax_{\tau} R^*(\tau)$ denote an optimal trajectory from context $s$, and define its coverage under the behavior policy as $p^\star(s) = \pi_\beta(\tau^\star(s)|s)$.

\begin{theorem}[Coverage-Dependent Suboptimality Bound]
\label{thm:coverage_tradeoff}
Under the conditions of Theorem~\ref{thm:temp_bound}, with probability at least $1-\delta$:
\begin{equation}
\max_\tau R^*(\tau) - V^{\pi_\lambda}(s)
\leq \lambda \log\frac{1}{p^\star(s)}
+ R_{\max}\!\left( e^{2\epsilon/\lambda} - 1 \right).
\label{eq:coverage_tradeoff}
\end{equation}
For $\lambda \geq 2\epsilon$, the right-hand side is at most $\lambda\log(1/p^\star(s)) + 4R_{\max}\epsilon/\lambda$, which is minimized at
\begin{equation}
\lambda^\star = 2\sqrt{\frac{R_{\max}\,\epsilon}{\log(1/p^\star(s))}},
\label{eq:lambda_star}
\end{equation}
where the minimized bound equals $4\sqrt{R_{\max}\,\epsilon\,\log(1/p^\star(s))}$.
\end{theorem}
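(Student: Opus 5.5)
We plan to obtain Theorem~\ref{thm:coverage_tradeoff} as a direct specialization of Theorem~\ref{thm:temp_bound}, taking as comparator the deterministic policy that puts all mass on an optimal trajectory. Specifically, set $\tilde{\pi}(\tau|s) = \mathbbm{1}\{\tau = \tau^\star(s)\}$. This is a legitimate distribution over $\mathcal{T}$. Because Theorem~\ref{thm:temp_bound} holds simultaneously for every comparator on the same event of probability at least $1-\delta$, it applies to this data-independent choice without a further union bound.

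We then evaluate the two comparator-dependent terms.

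\begin{itemize}
\item \emph{Value.} $V^{\tilde{\pi}}(s) = R^*(\tau^\star(s)) = \max_\tau R^*(\tau)$. The value is measured with the true reward $R^*$, consistent with how $V$ enters Theorem~\ref{thm:noisy_improvement}.
\item \emph{Coverage.} Since $\tilde{\pi}$ is a point mass,
\[
D_{\mathrm{KL}}(\tilde{\pi}(\cdot|s)\,\|\,\pi_\beta(\cdot|s)) = 1\cdot\log\frac{1}{\pi_\beta(\tau^\star(s)|s)} = \log\frac{1}{p^\star(s)}.
\]
If $p^\star(s) = 0$, both sides are $+\infty$ and the bound is vacuous, so we may assume $p^\star(s)>0$.
\end{itemize}

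Substituting both into \eqref{eq:temp_comparator_bound} and rearranging gives \eqref{eq:coverage_tradeoff} directly.

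For the second part, we invoke the already-stated inequality $R_{\max}(e^{2\epsilon/\lambda}-1)\le 4R_{\max}\epsilon/\lambda$ for $\lambda\ge 2\epsilon$. If we were to re-derive it, we would use $e^x - 1 \le x e^x \le e\,x \le 2\cdot 2x$ for $x = 2\epsilon/\lambda \le 1$; more tightly, convexity gives $e^x-1\le (e-1)x\le 2x$ on $[0,1]$. This yields the upper bound
\[
f(\lambda) = \lambda L + \frac{c}{\lambda}, \qquad L = \log\frac{1}{p^\star(s)}, \quad c = 4R_{\max}\epsilon.
\]
This function is convex on $\lambda>0$. Setting $f'(\lambda) = L - c/\lambda^2 = 0$ gives
\[
\lambda^\star = \sqrt{\frac{c}{L}} = 2\sqrt{\frac{R_{\max}\epsilon}{L}},
\]
and the minimum value is $2\sqrt{cL} = 4\sqrt{R_{\max}\epsilon L}$, matching \eqref{eq:lambda_star}.

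The main subtlety is that the minimizer of the relaxed bound is only valid inside the regime $\lambda\ge 2\epsilon$ where the relaxation holds. The condition $\lambda^\star\ge 2\epsilon$ is equivalent to $R_{\max}\ge \epsilon L$. We would state the theorem's last claim as the minimizer of the displayed surrogate $f$, which is how it is phrased. Where needed, we would remark that when $R_{\max} < \epsilon L$ the constrained optimum sits at the boundary $\lambda = 2\epsilon$. Everything else is routine substitution.
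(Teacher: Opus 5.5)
Your proposal is correct and follows the same route as the paper: take the point mass on $\tau^\star(s)$ as comparator, use $D_{\mathrm{KL}} = \log(1/p^\star(s))$, substitute into Theorem~\ref{thm:temp_bound}, and then minimize the convex surrogate $\lambda\log(1/p^\star(s)) + 4R_{\max}\epsilon/\lambda$. Two small notes: your first chain ($e^x-1\le ex\le 4x$) would only give $8R_{\max}\epsilon/\lambda$, so it is the convexity bound $e^x-1\le(e-1)x\le 2x$ that is needed; and your remark that $\lambda^\star\ge 2\epsilon$ holds only when $R_{\max}\ge\epsilon\log(1/p^\star(s))$ is a valid caveat that the paper's proof does not state.
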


Theorem~\ref{thm:coverage_tradeoff} gives the central trade-off of exponential reward weighting. The first term measures the difficulty of recovering the optimal trajectory from offline data: if the behavior policy rarely visits the optimal trajectory, no support-preserving method can reliably recover it. The second term captures sensitivity to noisy feedback. Therefore, the optimal temperature increases with reward noise and decreases with data coverage.

The coverage term admits a useful sequential interpretation. Using the autoregressive factorization from Equation~\eqref{eq:traj_policy},
\begin{equation}
\log\frac{1}{p^\star(s)} = \sum_{t=1}^{T} \log\frac{1}{\pi_\beta(a_t^\star \mid s_t^\star)},
\end{equation}
where $(s_t^\star, a_t^\star)$ are the states and actions along the optimal trajectory. Thus, the coverage penalty decomposes into a sum of per-step costs. Although the probability of observing a specific length-$T$ trajectory under the behavior policy decreases multiplicatively with the horizon, the corresponding penalty in the bound is its logarithm, which accumulates additively across steps. Therefore, the cost of recovering a target trajectory grows linearly with the trajectory length rather than exponentially.

When rewards are noiseless, the noise term disappears and smaller temperatures increasingly favor optimal trajectories, with the remaining limitation determined only by offline coverage. Under noisy rewards, however, $\lambda$ cannot be reduced arbitrarily: aggressive reweighting improves exploitation of high-reward trajectories but amplifies sensitivity to reward noise, while larger temperatures improve robustness at the cost of staying closer to the behavior policy. This trade-off explains the inverted-U relationship between temperature and recommendation performance observed empirically in Section~\ref{sec:experiments} (Figure~\ref{fig:lambda_main}). Although Equation~\eqref{eq:lambda_star} characterizes the optimal temperature $\lambda^*$, evaluating it requires unknown quantities such as the reward noise level and the coverage of the optimal trajectory. Therefore, it cannot be directly used for hyperparameter selection in practice, and we tune $\lambda$ empirically.
\section{Experiments}
\label{sec:experiments}

\textbf{Experimental Setting.}
Our theoretical analysis considers the general trajectory-level setting for autoregressive generative recommenders. In experiments, we instantiate this framework in the next-item recommendation setting, corresponding to the single-step case ($T=1$). In this setting, the trajectory consists of a user context and one recommended item, $\tau=(s,a)$, and the trajectory reward reduces to the observed item-level feedback, $R(\tau)=r(s,a)$. This allows us to directly evaluate exponential reward weighting while preserving the connection to the general formulation.

\begin{table*}[t]
\centering
\caption{Dataset Statistics}
\label{tab:dataset_stats}
\begin{tabular}{lcccc}
\toprule
Dataset & Users & Items & Avg. Interactions/User & Test Cases ($r \geq 4.5$) \\
\midrule
ML-1M & 6,040 & 3,706 & 165.60 & 1,530 \\
ML-20M & 138,493 & 26,744 & 144.41 & 40,667 \\
Amazon Books & 811,227 & 695,762 & 14.47 & 424,271 \\
StreamCo & O(Millions) & O(Thousands) & O(Tens) & O(Thousands) \\
\bottomrule
\end{tabular}
\end{table*}

\textbf{Datasets.}
We evaluate on three public sequential recommendation benchmarks and one large-scale industrial dataset: MovieLens 1M (ML-1M), MovieLens 20M (ML-20M)~\cite{harper2015movielens}, Amazon Reviews~\cite{mcauley2015image}, and an anonymized proprietary streaming dataset, denoted as \emph{StreamCo}. Each dataset contains user-item interaction sequences with associated feedback signals. Table~\ref{tab:dataset_stats} summarizes the statistics of each dataset. The benchmarks cover a range of recommendation settings, from dense movie recommendation to sparse large-catalog recommendation.

\textbf{Model and Training Setup.}
We use HSTU~\cite{zhai2024actions} as the underlying generative recommender model. The pretrained HSTU model serves as the behavior cloning (BC) baseline, predicting the next item given the user's interaction history.
For reward-model-based methods, we train a lightweight reward head on top of the recommender representation to predict the feedback associated with the selected item. The predicted item-level reward is used as the optimization signal for methods that require a learned reward function.

\textbf{Algorithms.}
We compare five approaches spanning behavior cloning, reward-weighted supervised fine-tuning, and RL-based alignment:
\begin{itemize}
    \item \textbf{BC} (Behavior Cloning): The pretrained HSTU model~\cite{zhai2024actions}, trained only with next-item prediction and without reward optimization.
    \item \textbf{RSFT} (Reward-weighted Supervised Fine-Tuning): A reward-weighted supervised fine-tuning approach that weights training examples according to their observed rewards~\cite{mukherjeeoffline}.
    \item \textbf{DPO} (Direct Preference Optimization): A preference optimization method that increases the likelihood of preferred items relative to dispreferred ones~\cite{rafailov2023direct}. Since recommendation datasets do not contain explicit preference pairs, we construct preferences by sampling candidate items and comparing them using reward-model scores, following online DPO-style training~\cite{deng2025onerec}.
    \item \textbf{PPO} (Proximal Policy Optimization): A reinforcement learning approach~\cite{schulman2017proximal, stiennon2020learning, ouyang2022training} that updates the recommender policy using rewards from the learned reward model.
    \item \textbf{Exp-RSFT} (Exponential Reward-weighted Supervised Fine-Tuning): Our proposed method, described in Algorithm~\ref{alg:exp-rw-sft}. In the experimental $T=1$ setting, it weights next-item likelihoods by $\exp(r(s,a)/\lambda)$, which is the single-step instantiation of the trajectory-level exponential weighting derived in Section~\ref{sec:exp_rw}.
\end{itemize}

Due to the proprietary nature of the StreamCo dataset, we report results relative to RSFT rather than absolute performance values. We train the algorithms for 30, 12, and 10 epochs on ML-1M, ML-20M, and Amazon Books, respectively.

\textbf{Evaluation.}
We evaluate on the standard test splits of ML-1M, ML-20M, and Amazon Books, restricting evaluation to held-out interactions with rating $\geq 4.5$. This focuses evaluation on high-preference recommendations and aligns with the reward optimization objective. All methods, including BC, are evaluated on the same filtered test sets.
For each test case, let $r_i$ denote the rank assigned to the ground-truth item among all candidate items. We report Hit Rate (HR@$K$), Normalized Discounted Cumulative Gain (NDCG@$K$), and Mean Reciprocal Rank (MRR):
\begin{align}
    \text{HR@}K &= \frac{1}{N}\sum_{i=1}^{N}\mathbbm{1}(r_i \leq K),\\
    \text{NDCG@}K &= \frac{1}{N}\sum_{i=1}^{N}
    \frac{\mathbbm{1}(r_i \leq K)}
    {\log_2(r_i+1)},\\
    \text{MRR} &= \frac{1}{N}\sum_{i=1}^{N}\frac{1}{r_i}.
\end{align}
We report HR and NDCG at cutoffs $K\in\{10,50\}$.

\begin{table}[!ht]
  \centering
  \caption{Reward model prediction performance compared to naive baselines. The trained reward model fails to outperform the simple Item Mean baseline on most metrics. Columns abbreviate the User-Mean, Item-Mean, Global-Mean, and trained reward-model predictors. Best results per row are \textbf{bolded}.}
  \label{tab:rm_baselines}
  \small
  \setlength{\tabcolsep}{3.5pt}
  \begin{tabular}{llcccc}
    \toprule
    Dataset & Metric & User & Item & Global & RM \\
    \midrule
    \multirow{2}{*}{ML-1M}
      & MSE & 1.0784 & \textbf{0.9179} & 1.2394 & 1.2166 \\
      & MAE & 0.8232 & \textbf{0.7619} & 0.9268 & 0.8680 \\
    \midrule
    \multirow{2}{*}{ML-20M}
      & MSE & 0.9550 & \textbf{0.8898} & 1.0939 & 0.8963 \\
      & MAE & 0.7588 & 0.7320 & 0.8423 & \textbf{0.7307} \\
    \midrule
    \multirow{2}{*}{Amazon Books}
      & MSE & 1.1578 & \textbf{0.7290} & 1.1868 & 1.1067 \\
      & MAE & 0.7411 & \textbf{0.5651} & 0.8636 & 0.8140 \\
    \midrule
    \multirow{2}{*}{StreamCo}
      & MSE & \textbf{2.412} & 2.568 & 2.568 & 2.486 \\
      & MAE & 2.112 & 2.442 & 2.451 & \textbf{1.916} \\
    \bottomrule
  \end{tabular}
\end{table}

\begin{table*}[t]
  \centering
  \caption{Recommendation performance on the public benchmarks (rating $\geq 4.5$) and the proprietary StreamCo dataset (high-reward trajectories). Best results per dataset and metric are \textbf{bolded}. StreamCo values are relative changes from RSFT, used in place of BC due to data confidentiality.}
  \label{tab:public_results}
  \begin{tabular}{llcccccc}
    \toprule
    Dataset & Method & NDCG@10 & NDCG@50 & HR@10 & HR@50 & MRR & Avg Reward \\
    \midrule
    \multirow{5}{*}{ML-1M}
      & BC & 0.1304 & 0.1866 & 0.2405 & 0.4967 & 0.1116 & 3.8672 \\
      & DPO & 0.0091 & 0.0231 & 0.0163 & 0.0824 & 0.0122 & 3.3737 \\
      & PPO & 0.0303 & 0.0617 & 0.0673 & 0.2124 & 0.0285 & \textbf{4.4775} \\
      & RSFT & 0.1437 & 0.2003 & 0.2608 & 0.5163 & 0.1230 & 3.6558 \\
      & Exp-RSFT & \textbf{0.1465} & \textbf{0.2029} & \textbf{0.2712} & \textbf{0.5261} & \textbf{0.1235} & 3.6933 \\
    \midrule
    \multirow{5}{*}{ML-20M}
      & BC & 0.1753 & 0.2313 & 0.3035 & 0.5569 & 0.1510 & 3.7388 \\
      & DPO & 0.0196 & 0.0342 & 0.0372 & 0.1052 & 0.0195 & \textbf{4.4008} \\
      & PPO & 0.1164 & 0.1688 & 0.2141 & 0.4536 & 0.1008 & 3.7547 \\
      & RSFT & 0.1847 & 0.2403 & 0.3161 & 0.5677 & 0.1593 & 3.7904 \\
      & Exp-RSFT & \textbf{0.1912} & \textbf{0.2462} & \textbf{0.3238} & \textbf{0.5726} & \textbf{0.1651} & 3.8611 \\
    \midrule
    \multirow{5}{*}{Amazon Books}
      & BC & 0.0328 & 0.0478 & 0.0589 & 0.1279 & 0.0296 & 4.2824 \\
      & DPO & 0.0008 & 0.0020 & 0.0019 & 0.0075 & 0.0013 & \textbf{4.8615} \\
      & PPO & 0.0094 & 0.0156 & 0.0184 & 0.0469 & 0.0088 & 4.4549 \\
      & RSFT & 0.0349 & 0.0514 & 0.0630 & 0.1388 & 0.0314 & 4.3350 \\
      & Exp-RSFT & \textbf{0.0356} & \textbf{0.0520} & \textbf{0.0641} & \textbf{0.1397} & \textbf{0.0319} & 4.3374 \\
    \midrule
    \multirow{4}{*}{StreamCo}
      & RSFT & 0.00\% & 0.00\% & 0.00\% & 0.00\% & 0.00\% & 0.00\% \\
      & DPO & -99.30\% & -96.77\% & -99.14\% & -95.52\% & -95.34\% & \textbf{19.21\%} \\
      & PPO & -4.88\% & -24.01\% & -14.66\% & -35.71\% & -8.60\% & 13.39\% \\
      & Exp-RSFT & \textbf{106.20\%} & \textbf{69.71\%} & \textbf{98.07\%} & \textbf{49.58\%} & \textbf{86.31\%} & 17.27\% \\
    \bottomrule
  \end{tabular}
\end{table*}

\begin{figure*}[!t]
\centering
\includegraphics[width=0.95\textwidth]{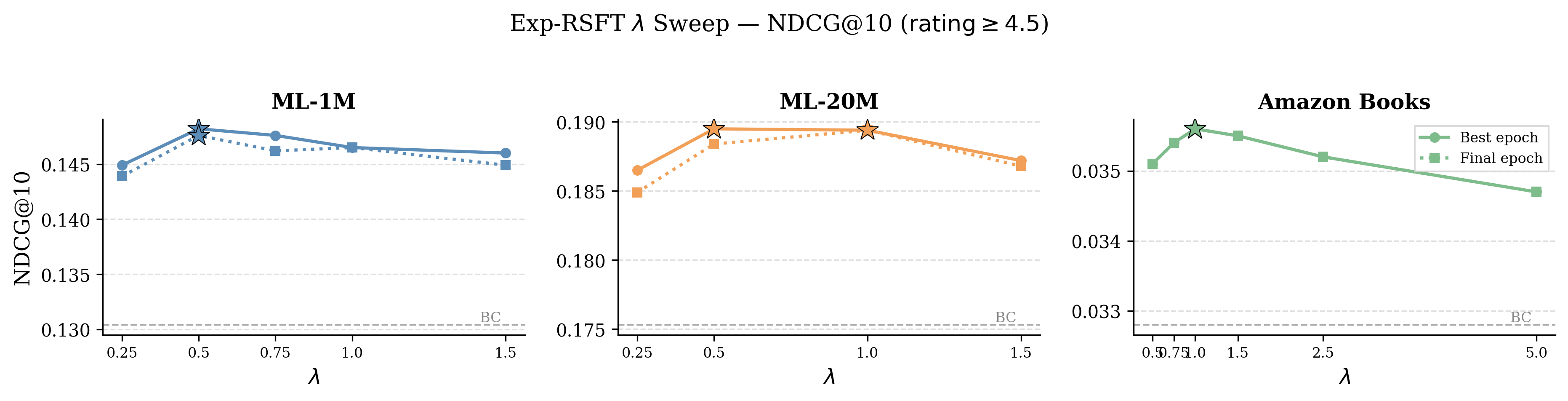}
\caption{NDCG@10 for different values of $\lambda$ across datasets.}
\label{fig:lambda_main}
\end{figure*}

\textbf{Evidence for Reward Hacking.}
Table~\ref{tab:public_results} shows that PPO and DPO can attain high reward-model scores while substantially degrading recommendation quality. This discrepancy suggests reward hacking: optimizing an imperfect learned reward function may exploit weaknesses in the reward model rather than improve alignment with true user preferences. We investigate this phenomenon through two analyses.

\textit{Benchmarking reward model quality.}
To evaluate the reliability of the learned reward model, we compare its prediction error against three simple baselines: User Mean, Item Mean, and Global Mean predictors (Table~\ref{tab:rm_baselines}). The reward model does not consistently outperform these baselines across datasets, indicating that accurately predicting user feedback for unseen items remains challenging.

\textit{Reward model-as-judge evaluation.}
We further test whether optimization methods exploit the reward model by measuring the average predicted reward across 10 generations from each of 10,000 evaluation contexts, reported as ``Avg Reward'' in Table~\ref{tab:public_results}. Reward-model score does not correlate with recommendation quality: on every dataset, the method achieving the highest Avg Reward also experiences severe degradation in ranking metrics (PPO on ML-1M; DPO on ML-20M, Amazon Books, and StreamCo), providing evidence of reward hacking. Training curves in Appendix~\ref{app:metric_evolution} show that this collapse occurs early in training.

\textbf{Robustness--Improvement Tradeoff with $\lambda$.}
Theorems~\ref{thm:temp_bound} and~\ref{thm:coverage_tradeoff} predict a tradeoff controlled by the temperature $\lambda$: small values aggressively emphasize high-reward actions but increase sensitivity to reward noise, while large values remain closer to the behavior policy and reduce potential improvement. We evaluate this prediction by sweeping $\lambda$ across all datasets.
Figure~\ref{fig:lambda_main} shows NDCG@10 as a function of $\lambda$; other metrics exhibit similar trends. Across all datasets, performance follows an inverted-U pattern, improving as $\lambda$ decreases from large values, reaching a maximum at an intermediate temperature ($\lambda \approx 0.5$--$1.0$), and degrading when $\lambda$ becomes too small. This behavior is consistent with the theoretical prediction: large temperatures suppress improvement by limiting deviation from the behavior policy, while excessively small temperatures amplify noise sensitivity. The same trend is observed when selecting either the best or final training epoch (Appendix~\ref{app:lambda-sweep}), ruling out early stopping as an explanation. These results confirm that moderate reward weighting provides the best balance between reward exploitation and robustness to noisy feedback.

\section{Conclusions and Limitations}
We analyzed exponential reward-weighted fine-tuning for generative recommenders trained offline from logged user feedback. Our theoretical analysis shows that the suboptimality of the learned policy decomposes into two interpretable costs: a coverage cost determined by limitations of the logging policy, and a noise cost arising from imperfect feedback. The temperature $\lambda$ balances these competing effects, yielding an interior optimum that explains the observed tradeoff between reward exploitation and robustness. Experiments on three public benchmarks and a large-scale proprietary dataset validate these predictions: temperature sweeps exhibit the predicted inverted-U behavior, while standard reward-model-based post-training methods can over-optimize imperfect reward signals and degrade recommendation quality. Exp-RSFT provides a simple and effective alternative for post-training generative recommenders in sparse, noisy offline settings.

Our analysis also has several limitations. First, it focuses on the regime where users interact with only a small fraction of a large catalog and feedback is available as scalar rewards. In settings with substantially richer supervision, such as dense preference comparisons or extensive user feedback coverage, other post-training approaches may become more effective. Second, Exp-RSFT does not eliminate the fundamental challenge of preference estimation under sparse interactions. More sophisticated reward modeling approaches, including uncertainty-aware or conservative reward models and models incorporating additional user signals, may alleviate some of the limitations observed in our experiments. However, when feedback covers only a small portion of the catalog, generalizing reliable preferences to unseen items remains challenging. Third, our theoretical guarantees assume sub-Gaussian reward noise; real-world feedback may exhibit heavier tails or more complex dependencies. Extending the analysis to broader noise models and studying the interaction between improved reward modeling and offline policy optimization are important directions for future work.

\bibliography{main}

\clearpage
\appendix
\section{Proofs}
\label{sec:proofs}

\setcounter{equation}{0}
\renewcommand{\theequation}{A\arabic{equation}}

Throughout the appendix, fix a context $s$ and recall the notation of Section~\ref{sec:theory}: $\mathcal{T}$ is the set of trajectories reachable from $s$; $\hat{R} = R^* + \xi$ are the noisy rewards of Assumption~\ref{ass:noise}; $\pi_\lambda(\tau|s) \propto \pi_\beta(\tau|s)\exp(\hat{R}(\tau)/\lambda)$ is the exponential tilt computed from the noisy rewards; and values $V^{\pi}(s) = \mathbb{E}_{\tau \sim \pi(\cdot|s)}[R^*(\tau)]$ are always evaluated under the true rewards. If $D_{\mathrm{KL}}(\tilde{\pi}(\cdot|s) \,\|\, \pi_\beta(\cdot|s)) = \infty$, every bound below holds trivially, so we assume throughout that it is finite; equivalently, $\tilde{\pi}(\cdot|s)$ is supported on the support of $\pi_\beta(\cdot|s)$. We also note that the noiseless monotonic improvement used as a warm-up in Section~\ref{sec:theory} is the $\sigma = 0$ case of Theorem~\ref{thm:noisy_improvement}: with $\sigma = 0$ we have $\epsilon = 0$, and taking $\tilde{\pi} = \pi_\beta$ gives $V^{\pi^*_\lambda}(s) \geq V^{\pi_\beta}(s)$ deterministically.

\subsection{Proof of Theorem~\ref{thm:noisy_improvement}}
\label{subsec:noisy-improvement-proof}

\noindent\textbf{Theorem~\ref{thm:noisy_improvement}} (Policy Improvement Under Noisy Rewards; restated)\textbf{.}
\emph{Fix a context $s$. Under Assumption~\ref{ass:noise}, with probability at least $1 - \delta$, for every comparator policy $\tilde{\pi}$:
\begin{equation*}
    V^{\pi_\lambda}(s) \;\geq\; V^{\tilde{\pi}}(s)
    \;-\; \lambda\, D_{\mathrm{KL}}\!\left(\tilde{\pi}(\cdot|s)\,\|\,
    \pi_\beta(\cdot|s)\right) \;-\; 2\epsilon,
\end{equation*}
where $\epsilon := \sigma\sqrt{2\log(2|\mathcal{T}|/\delta)}$.}

\begin{proof}
\textbf{Step 1 (noise event).} Since each $\xi(\tau)$ is $\sigma$-sub-Gaussian, a standard tail bound combined with a union bound over the $|\mathcal{T}|$ trajectories yields, with probability at least $1-\delta$,
\begin{equation}\label{eq:noise_event}
|\xi(\tau)| \leq \epsilon \quad \forall \tau \in \mathcal{T},
\qquad \epsilon := \sigma\sqrt{2\log(2|\mathcal{T}|/\delta)}.
\end{equation}
No independence across trajectories is needed: the union bound uses only the marginal tail of each $\xi(\tau)$. We condition on this event for the remainder of the proof and fix a comparator policy $\tilde{\pi}$.

\textbf{Step 2 (variational identity).} Define
\begin{equation}\label{eq:F_def}
F(\pi) = \sum_{\tau} \pi(\tau|s)\,\hat{R}(\tau)
- \lambda\, D_{\mathrm{KL}}\!\left(\pi(\cdot|s) \,\|\,
\pi_\beta(\cdot|s)\right),
\end{equation}
and recall that the tilted policy normalizes as
\begin{gather*}
\pi_\lambda(\tau|s) = \frac{1}{Z}\,\pi_\beta(\tau|s)\exp\!\left(\frac{\hat{R}(\tau)}{\lambda}\right), \\
Z = \sum_{\tau} \pi_\beta(\tau|s)\exp\!\left(\frac{\hat{R}(\tau)}{\lambda}\right).
\end{gather*}
For any policy $\pi$ with $\pi(\cdot|s)$ supported on the support of $\pi_\beta(\cdot|s)$,
\begin{align*}
D_{\mathrm{KL}}(\pi \,\|\, \pi_\lambda)
&= \sum_{\tau} \pi(\tau|s)\log\frac{\pi(\tau|s)}
{\pi_\beta(\tau|s)\exp(\hat{R}(\tau)/\lambda)/Z} \\
&= D_{\mathrm{KL}}(\pi \,\|\, \pi_\beta) + \log Z \\
&\quad - \frac{1}{\lambda}\sum_{\tau} \pi(\tau|s)\,\hat{R}(\tau),
\end{align*}
and rearranging yields the identity
\begin{equation}\label{eq:variational_identity}
F(\pi) = \lambda \log Z - \lambda\, D_{\mathrm{KL}}(\pi \,\|\,
\pi_\lambda).
\end{equation}
Since the KL divergence is nonnegative and vanishes iff
$\pi = \pi_\lambda$, the tilted policy is the unique maximizer of $F$; in
particular, $F(\pi_\lambda) \geq F(\tilde{\pi})$.

\textbf{Step 3 (comparator inequality).} Writing out
$F(\pi_\lambda) \geq F(\tilde{\pi})$ via \eqref{eq:F_def} and dropping the nonnegative term
$\lambda D_{\mathrm{KL}}(\pi_\lambda \,\|\, \pi_\beta) \geq 0$ on the
left-hand side:
\begin{equation}\label{eq:comparator_dominance}
\sum_{\tau} \pi_\lambda(\tau|s)\,\hat{R}(\tau)
\;\geq\; \sum_{\tau} \tilde{\pi}(\tau|s)\,\hat{R}(\tau)
- \lambda\, D_{\mathrm{KL}}(\tilde{\pi} \,\|\, \pi_\beta).
\end{equation}

\textbf{Step 4 (noise conversion).} On the event \eqref{eq:noise_event}, since
$\pi_\lambda(\cdot|s)$ and $\tilde{\pi}(\cdot|s)$ are probability
distributions,
\begin{gather*}
V^{\pi_\lambda}(s) \geq \sum_{\tau} \pi_\lambda(\tau|s)\,\hat{R}(\tau) - \epsilon, \\
\sum_{\tau} \tilde{\pi}(\tau|s)\,\hat{R}(\tau) \geq V^{\tilde{\pi}}(s) - \epsilon.
\end{gather*}
Chaining these with \eqref{eq:comparator_dominance} completes the proof.
\end{proof}

\subsection{Proof of Theorem~\ref{thm:temp_bound}}
\label{subsec:temp-bound-proof}

\noindent\textbf{Theorem~\ref{thm:temp_bound}} (Temperature-Dependent Comparator Bound; restated)\textbf{.}
\emph{Under the conditions of Theorem~\ref{thm:noisy_improvement}, if rewards are bounded $R^*(\tau) \in [0, R_{\max}]$, then with probability at least $1-\delta$, for every comparator policy $\tilde{\pi}$:
\begin{align*}
    V^{\pi_\lambda}(s) \;\geq\; V^{\tilde{\pi}}(s)
    &\;-\; \lambda\, D_{\mathrm{KL}}\!\left(\tilde{\pi}(\cdot|s)\,\|\,
    \pi_\beta(\cdot|s)\right) \\
    &\;-\; R_{\max}\!\left(e^{2\epsilon/\lambda} - 1\right).
\end{align*}
When $\lambda \geq 2\epsilon$, the noise cost simplifies: $R_{\max}\left(e^{2\epsilon/\lambda} - 1\right) \leq 4R_{\max}\epsilon/\lambda$.}

\begin{proof}
Let $\pi_\lambda^*$ denote the tilt under the \emph{true} rewards and $\pi_\lambda$ the tilt under the noisy rewards:
\begin{align}
\pi_\lambda^*(\tau|s) &= \frac{1}{Z^*}\,\pi_\beta(\tau|s)\exp\!\left(\frac{R^*(\tau)}{\lambda}\right), \label{eq:clean_tilt}\\
\pi_\lambda(\tau|s) &= \frac{1}{Z}\,\pi_\beta(\tau|s)\exp\!\left(\frac{\hat{R}(\tau)}{\lambda}\right), \label{eq:noisy_tilt}
\end{align}
with normalizing factors $Z^* = \sum_{\tau} \pi_\beta(\tau|s)\exp(R^*(\tau)/\lambda)$ and $Z = \sum_{\tau} \pi_\beta(\tau|s)\exp(\hat{R}(\tau)/\lambda)$. As in the proof of Theorem~\ref{thm:noisy_improvement}, we condition on the event \eqref{eq:noise_event}, which holds with probability at least $1 - \delta$, and fix a comparator policy $\tilde{\pi}$.

\textbf{Step 1 (clean comparator inequality).} Applying the variational identity \eqref{eq:variational_identity} with the true rewards $R^*$ in place of $\hat{R}$ shows that $\pi_\lambda^*$ is the unique maximizer of $F^*(\pi) = \sum_{\tau} \pi(\tau|s)\,R^*(\tau) - \lambda\, D_{\mathrm{KL}}(\pi \,\|\, \pi_\beta)$. Hence $F^*(\pi_\lambda^*) \geq F^*(\tilde{\pi})$, and dropping the nonnegative term $\lambda D_{\mathrm{KL}}(\pi_\lambda^* \,\|\, \pi_\beta) \geq 0$ on the left-hand side gives, deterministically,
\begin{equation}\label{eq:clean_comparator}
    V^{\pi_\lambda^*}(s) \;\geq\; V^{\tilde{\pi}}(s) - \lambda\, D_{\mathrm{KL}}(\tilde{\pi} \,\|\, \pi_\beta).
\end{equation}
No noise event is needed here: both sides involve only the true rewards.

\textbf{Step 2 (decomposition).} The value of the noisy tilt decomposes as
\begin{equation}\label{eq:decomp}
    V^{\pi_\lambda}(s) = V^{\pi_\lambda^*}(s) - \left(V^{\pi_\lambda^*}(s) - V^{\pi_\lambda}(s)\right),
\end{equation}
so by \eqref{eq:clean_comparator} it suffices to bound the noise difference $|V^{\pi_\lambda^*}(s) - V^{\pi_\lambda}(s)|$.

\textbf{Step 3 (bounding the noise difference).} Since $R^*(\tau) \in [0, R_{\max}]$:
\begin{align}
    \left|V^{\pi_\lambda^*}(s) - V^{\pi_\lambda}(s)\right|
    &= \left|\sum_{\tau} \left(\pi_\lambda^*(\tau|s) - \pi_\lambda(\tau|s)\right) R^*(\tau)\right| \nonumber \\
    &\leq R_{\max} \sum_{\tau} \left|\pi_\lambda^*(\tau|s) - \pi_\lambda(\tau|s)\right| \nonumber \\
    &= 2R_{\max}\,  D_{\mathrm{TV}}\!\left(\pi_\lambda^*(\cdot|s),\, \pi_\lambda(\cdot|s)\right). \label{eq:tv_reduction}
\end{align}
Define the noise weight $w(\tau) = \exp(\xi(\tau)/\lambda)$. On the event \eqref{eq:noise_event}:
\begin{equation}\label{eq:w_bounds}
    e^{-\epsilon/\lambda} \leq w(\tau) \leq e^{\epsilon/\lambda} \quad \forall\, \tau \in \mathcal{T}.
\end{equation}
Since $\hat{R}(\tau) = R^*(\tau) + \xi(\tau)$, the noisy partition function factorizes as
\begin{align}
    Z &= \sum_{\tau} \pi_\beta(\tau|s)\exp\!\left(\frac{R^*(\tau)}{\lambda}\right) w(\tau) \nonumber \\
    &= Z^* \sum_{\tau} \pi_\lambda^*(\tau|s)\, w(\tau), \label{eq:Z_factor}
\end{align}
where the last step uses \eqref{eq:clean_tilt}. The ratio $Z/Z^*$ is therefore the expected noise weight under the clean policy; combined with \eqref{eq:w_bounds}, this gives
\begin{equation}\label{eq:rho_bounds}
    e^{-\epsilon/\lambda} \leq \rho \leq e^{\epsilon/\lambda}, \quad \text{where } \rho := Z^*/Z.
\end{equation}
For any trajectory $\tau$, dividing \eqref{eq:noisy_tilt} by \eqref{eq:clean_tilt} gives the policy ratio
\begin{equation}\label{eq:ratio}
    \frac{\pi_\lambda(\tau|s)}{\pi_\lambda^*(\tau|s)} = w(\tau)\,\rho.
\end{equation}
Substituting \eqref{eq:ratio} into the total variation:
\begin{equation}\label{eq:tv_expr}
    2\,D_{\mathrm{TV}}\!\left(\pi_\lambda^*,\, \pi_\lambda\right)
    = \sum_{\tau} \pi_\lambda^*(\tau|s)\left|w(\tau)\,\rho - 1\right|.
\end{equation}
From \eqref{eq:w_bounds} and \eqref{eq:rho_bounds}, the product $w(\tau)\,\rho$ lies in $[e^{-2\epsilon/\lambda},\, e^{2\epsilon/\lambda}]$, so
\begin{equation}
    |w(\tau)\,\rho - 1| \leq \max\!\left(e^{2\epsilon/\lambda} - 1,\; 1 - e^{-2\epsilon/\lambda}\right) = e^{2\epsilon/\lambda} - 1,
\end{equation}
where the equality holds because $e^x - 1 \geq 1 - e^{-x}$ for $x \geq 0$. Substituting back into \eqref{eq:tv_expr}:
\begin{equation}\label{eq:tv_bound}
    D_{\mathrm{TV}}\!\left(\pi_\lambda^*(\cdot|s),\, \pi_\lambda(\cdot|s)\right) \leq \frac{e^{2\epsilon/\lambda} - 1}{2}.
\end{equation}

\textbf{Step 4 (combining).} Substituting \eqref{eq:tv_bound} into \eqref{eq:tv_reduction}, and the result together with \eqref{eq:clean_comparator} into \eqref{eq:decomp}:
\begin{align}
    V^{\pi_\lambda}(s) \geq V^{\tilde{\pi}}(s)
    &- \lambda\, D_{\mathrm{KL}}(\tilde{\pi} \,\|\, \pi_\beta) \nonumber \\
    &- R_{\max}\!\left(e^{2\epsilon/\lambda} - 1\right).
\end{align}
When $\lambda \geq 2\epsilon$, we have $2\epsilon/\lambda \leq 1$, and $e^x - 1 \leq 2x$ for $x \in [0,1]$, giving $R_{\max}(e^{2\epsilon/\lambda} - 1) \leq 4R_{\max}\epsilon/\lambda$. Substituting $\epsilon = \sigma\sqrt{2\log(2|\mathcal{T}|/\delta)}$ completes the proof.
\end{proof}

\subsection{Proof of Theorem~\ref{thm:coverage_tradeoff}}
\label{subsec:coverage-proof}

\noindent\textbf{Theorem~\ref{thm:coverage_tradeoff}} (Coverage-Dependent Suboptimality Bound; restated)\textbf{.}
\emph{Under the conditions of Theorem~\ref{thm:temp_bound}, with probability at least $1-\delta$:
\begin{equation*}
    \max_{\tau} R^*(\tau) - V^{\pi_\lambda}(s)
    \leq \lambda \log\frac{1}{p^\star(s)}
    + R_{\max}\!\left(e^{2\epsilon/\lambda} - 1\right).
\end{equation*}
When $\lambda \geq 2\epsilon$, the right-hand side is at most $\lambda\log(1/p^\star) + 4R_{\max}\epsilon/\lambda$, which is minimized at $\lambda^\star = 2\sqrt{R_{\max}\epsilon/\log(1/p^\star)}$, where the minimized bound equals $4\sqrt{R_{\max}\,\epsilon\,\log(1/p^\star)}$.}

\begin{proof}
If $p^\star(s) = 0$ the right-hand side of Equation~\eqref{eq:coverage_tradeoff} is infinite and the bound holds trivially, so assume $p^\star(s) > 0$. Take $\tilde{\pi}(\cdot|s)$ to be the optimal policy from $s$, i.e., the point mass on $\tau^\star(s)$. Then $V^{\tilde{\pi}}(s) = \max_{\tau} R^*(\tau)$ and
\begin{equation}
D_{\mathrm{KL}}\!\left(\tilde{\pi}(\cdot|s) \,\|\,
\pi_\beta(\cdot|s)\right)
= \log\frac{1}{\pi_\beta(\tau^\star(s)|s)} = \log\frac{1}{p^\star(s)}.
\end{equation}
Substituting into Theorem~\ref{thm:temp_bound} and rearranging yields Equation~\eqref{eq:coverage_tradeoff}.

For the second claim, when $\lambda \geq 2\epsilon$ the noise cost is at most $4R_{\max}\epsilon/\lambda$ as in Theorem~\ref{thm:temp_bound}, so the right-hand side is at most
\begin{equation}
g(\lambda) := \lambda\log\frac{1}{p^\star} + \frac{4R_{\max}\epsilon}{\lambda}.
\end{equation}
The function $g$ is convex with
$g'(\lambda) = \log(1/p^\star) - 4R_{\max}\epsilon/\lambda^2$, which vanishes at
\begin{gather}
\lambda^\star = 2\sqrt{\frac{R_{\max}\,\epsilon}{\log(1/p^\star)}}, \\
g(\lambda^\star) = 4\sqrt{R_{\max}\,\epsilon\,\log(1/p^\star)}.
\end{gather}
\end{proof}

\section{Additional Experimental Results}
\label{sec:additional-experiments}

\renewcommand{\dbltopfraction}{0.95}
\renewcommand{\topfraction}{0.95}
\renewcommand{\dblfloatpagefraction}{0.5}
\renewcommand{\floatpagefraction}{0.5}

\subsection{Training Details}
\label{app:training-details}
All experiments were run on 4 NVIDIA A100 GPUs. Each reported result corresponds to a single training run per algorithm--dataset configuration; the consistency of Exp-RSFT's improvements across all five evaluation metrics and all four datasets serves as evidence of the robustness of the reported gains. The temperature $\lambda$ and training epochs are specified in Section~\ref{sec:experiments}; all remaining hyperparameter settings, including the fixed random seeds used for training, are provided in the accompanying code, allowing exact replication of the reported results.

\subsection{Rating Distribution}
\label{app:rating-dist}
Figure~\ref{fig:rating_dist} shows the distribution of ratings across the three public datasets. Ratings concentrate at the high end of the scale, which motivates the rating $\geq 4.5$ filter used in our evaluation (Section~\ref{sec:experiments}).

\begin{figure*}[t]
  \centering
  \includegraphics[width=0.72\textwidth]{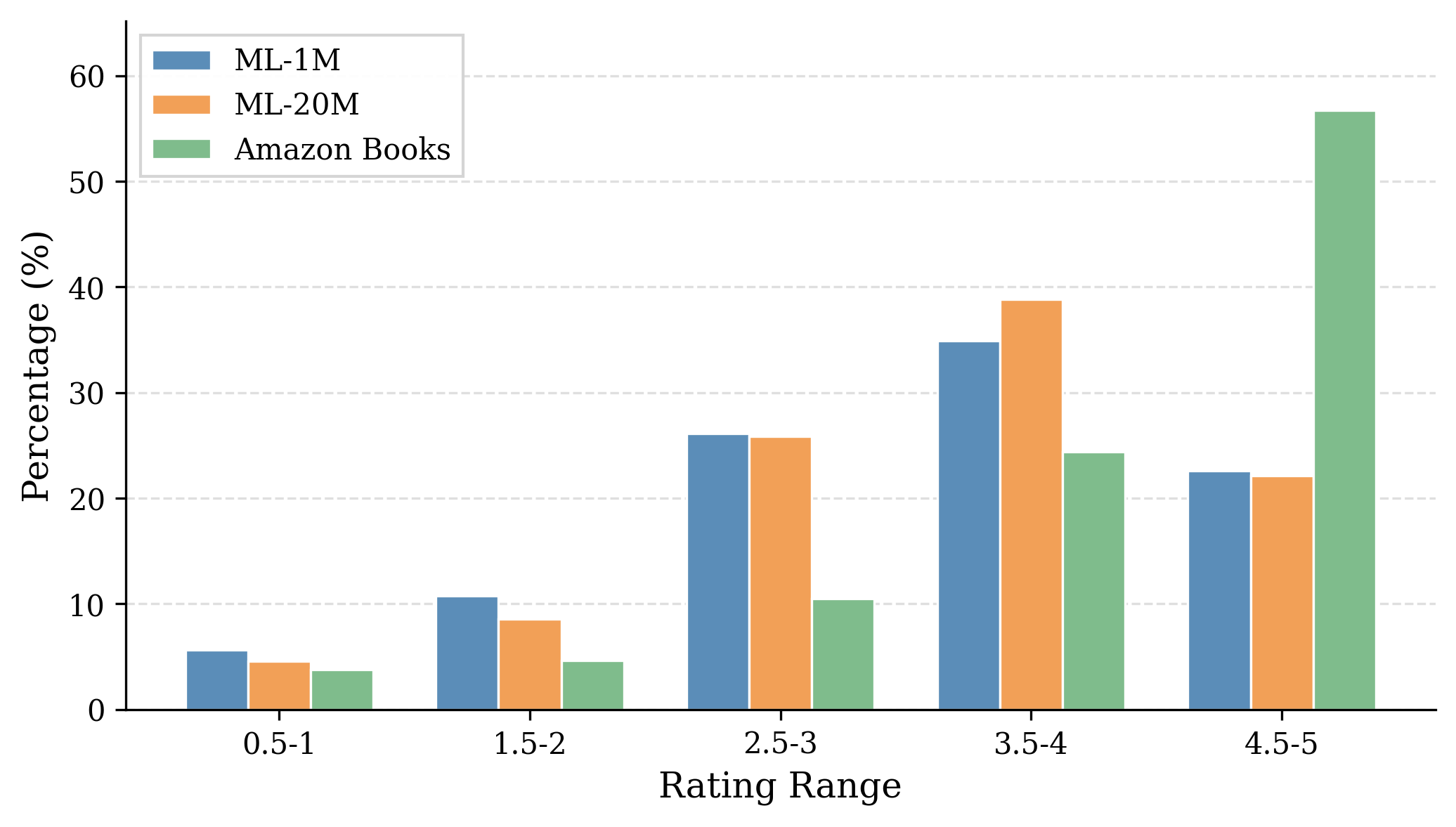}
  \caption{Distribution of ratings in the three public test datasets.}
  \label{fig:rating_dist}
\end{figure*}

\subsection{Temperature Sweep at Best and Final Epochs}
\label{app:lambda-sweep}
Figures~\ref{fig:lambda-sweep-best} and~\ref{fig:lambda-sweep-last} report the full temperature sweep of Section~\ref{sec:experiments} for all five evaluation metrics, selecting the best training epoch and the final training epoch, respectively. The inverted-U shape predicted by Theorem~\ref{thm:coverage_tradeoff} appears in both, ruling out early stopping as a confound.

\begin{figure*}[t]
  \centering
  \includegraphics[width=0.72\textwidth]{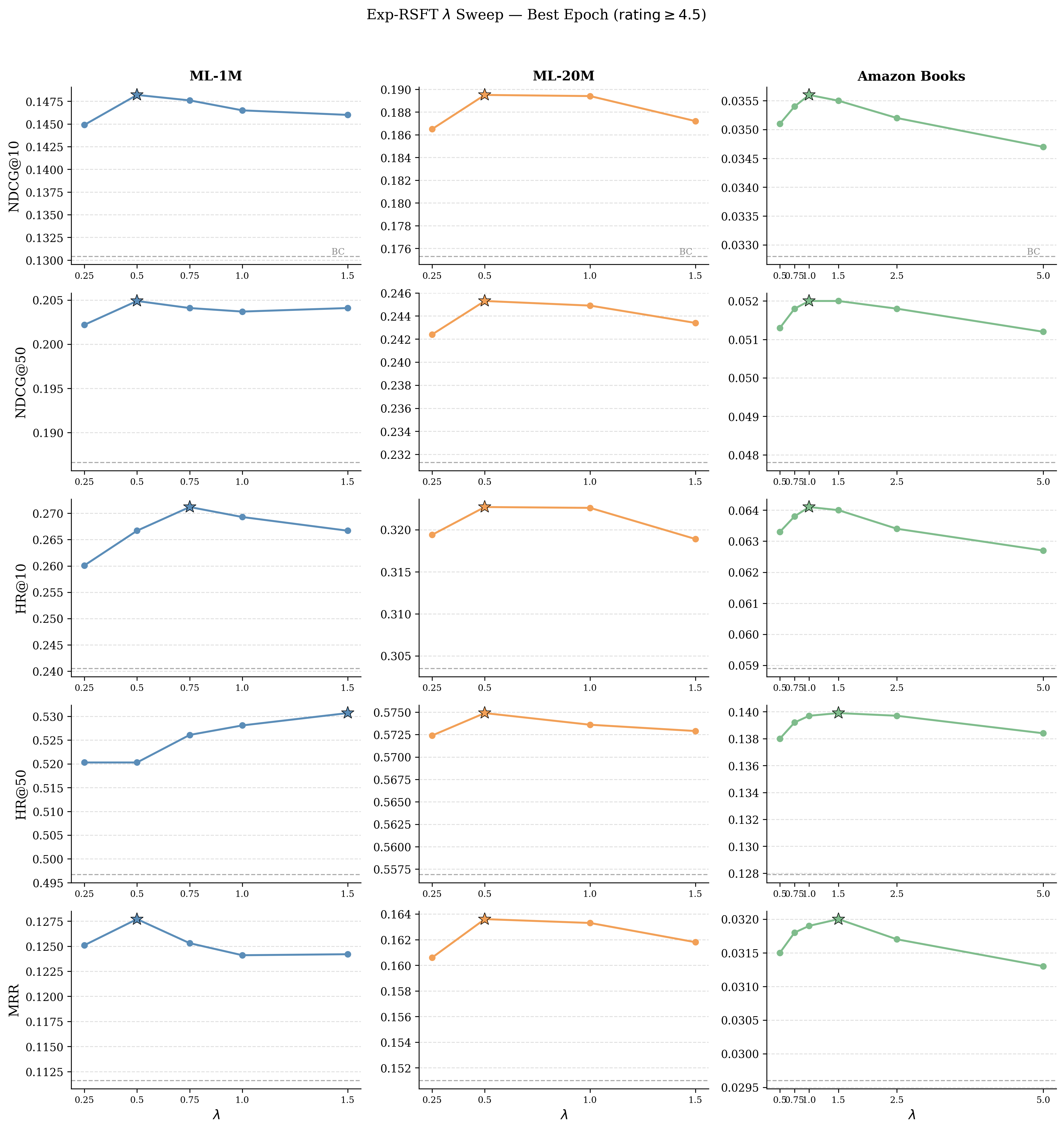}
  \caption{Best-epoch values of all five metrics for different $\lambda$, on the three public benchmarks.}
  \label{fig:lambda-sweep-best}
\end{figure*}

\begin{figure*}[t]
  \centering
  \includegraphics[width=0.72\textwidth]{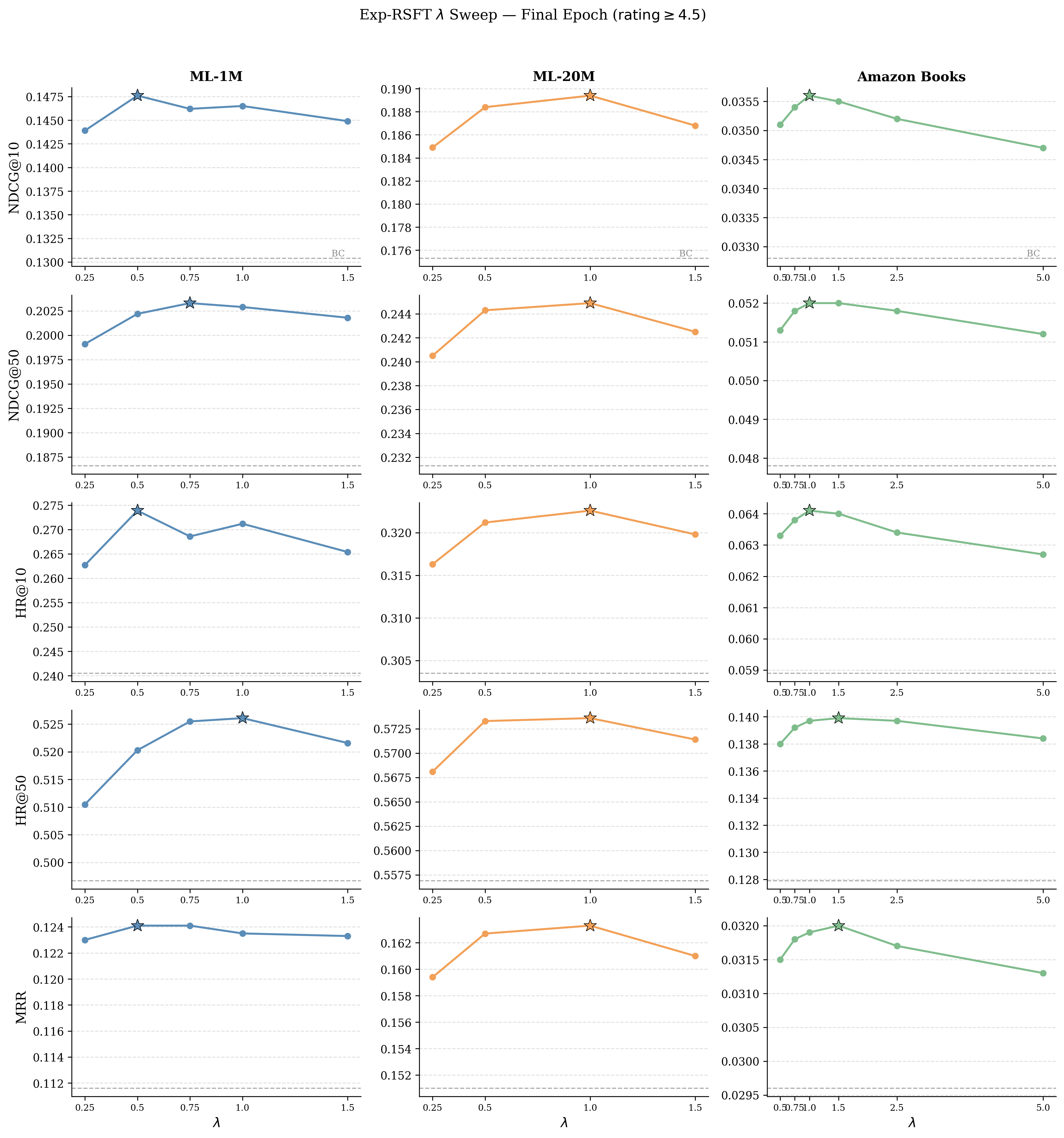}
  \caption{Final-epoch values of all five metrics for different $\lambda$, on the three public benchmarks.}
  \label{fig:lambda-sweep-last}
\end{figure*}

\subsection{Metric Evolution over Training}
\label{app:metric_evolution}
Figures~\ref{fig:metrics_all_ml-1m}--\ref{fig:metrics_all_amazon-books} show the evolution of all five metrics across training epochs for every algorithm on the three public benchmarks; epoch 0 corresponds to the BC baseline. PPO and DPO collapse early in training, consistent with reward-model over-optimization. Figures~\ref{fig:metrics_zoom_ml-1m}--\ref{fig:metrics_zoom_amazon-books} zoom in on RSFT and Exp-RSFT only: Exp-RSFT improves on RSFT at essentially every epoch on every dataset.

\begin{figure*}[t]
  \centering
  \includegraphics[width=0.72\textwidth]{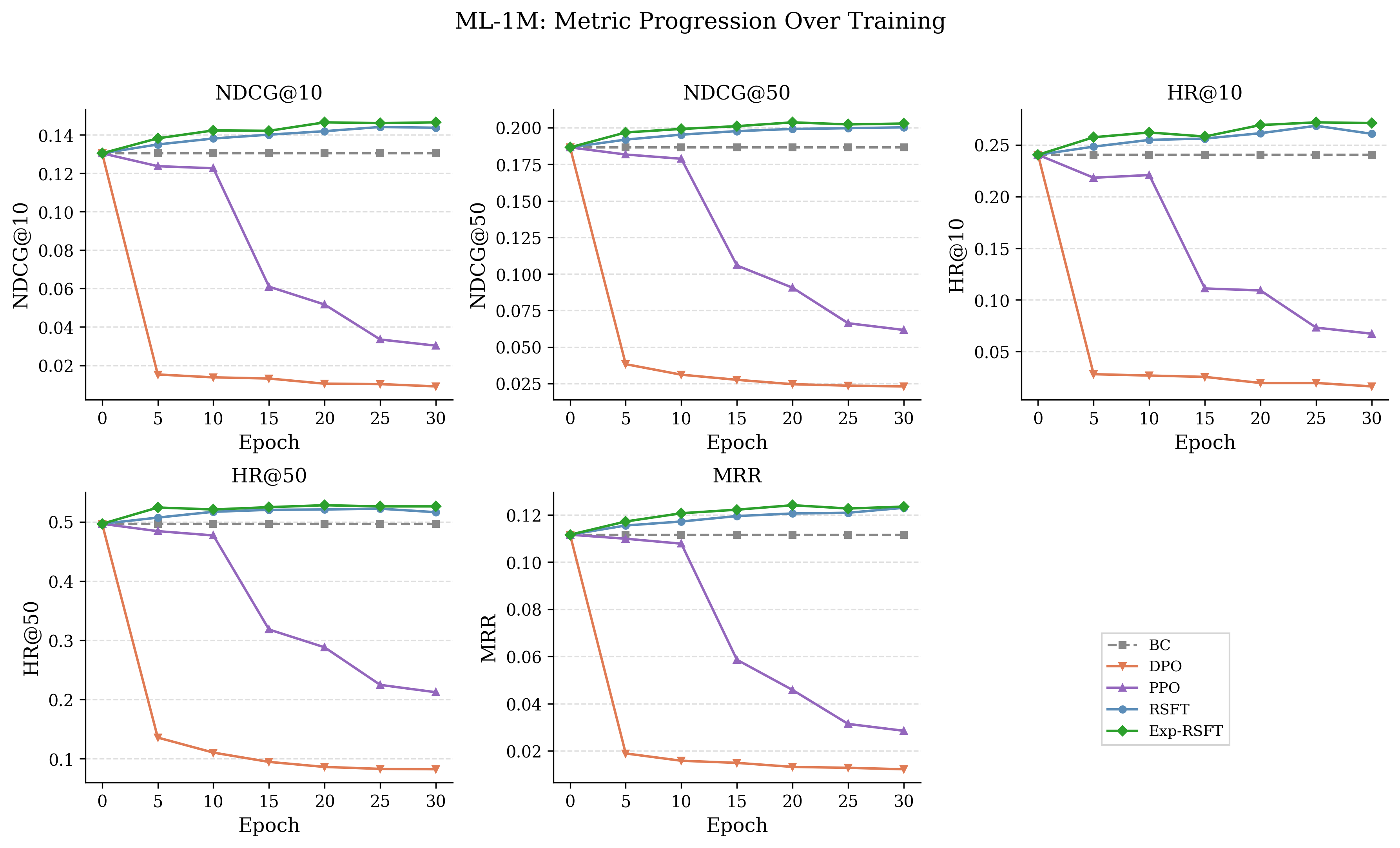}
  \caption{All metrics across training epochs for all algorithms on ML-1M.}
  \label{fig:metrics_all_ml-1m}
\end{figure*}

\begin{figure*}[t]
  \centering
  \includegraphics[width=0.72\textwidth]{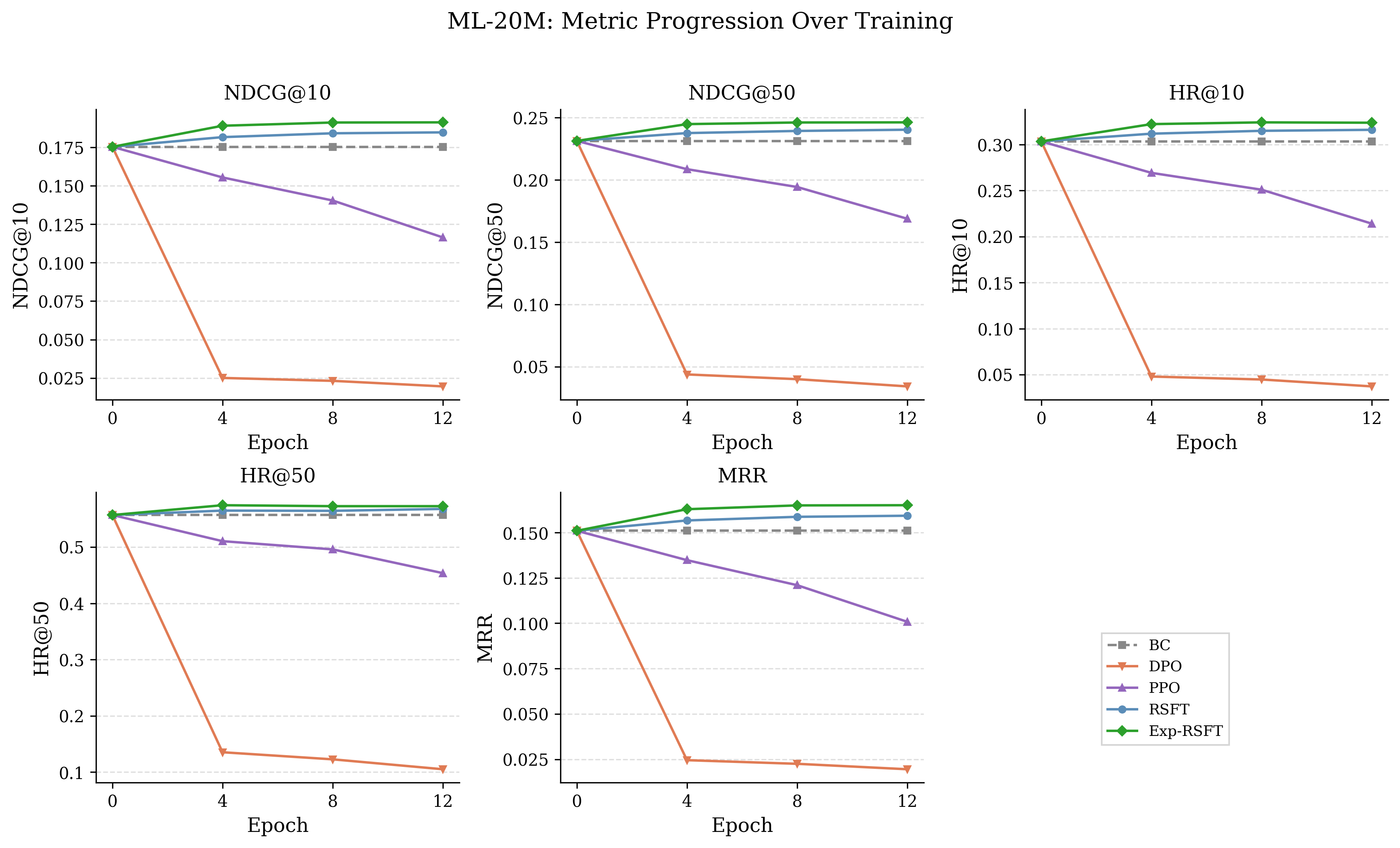}
  \caption{All metrics across training epochs for all algorithms on ML-20M.}
  \label{fig:metrics_all_ml-20m}
\end{figure*}

\begin{figure*}[t]
  \centering
  \includegraphics[width=0.72\textwidth]{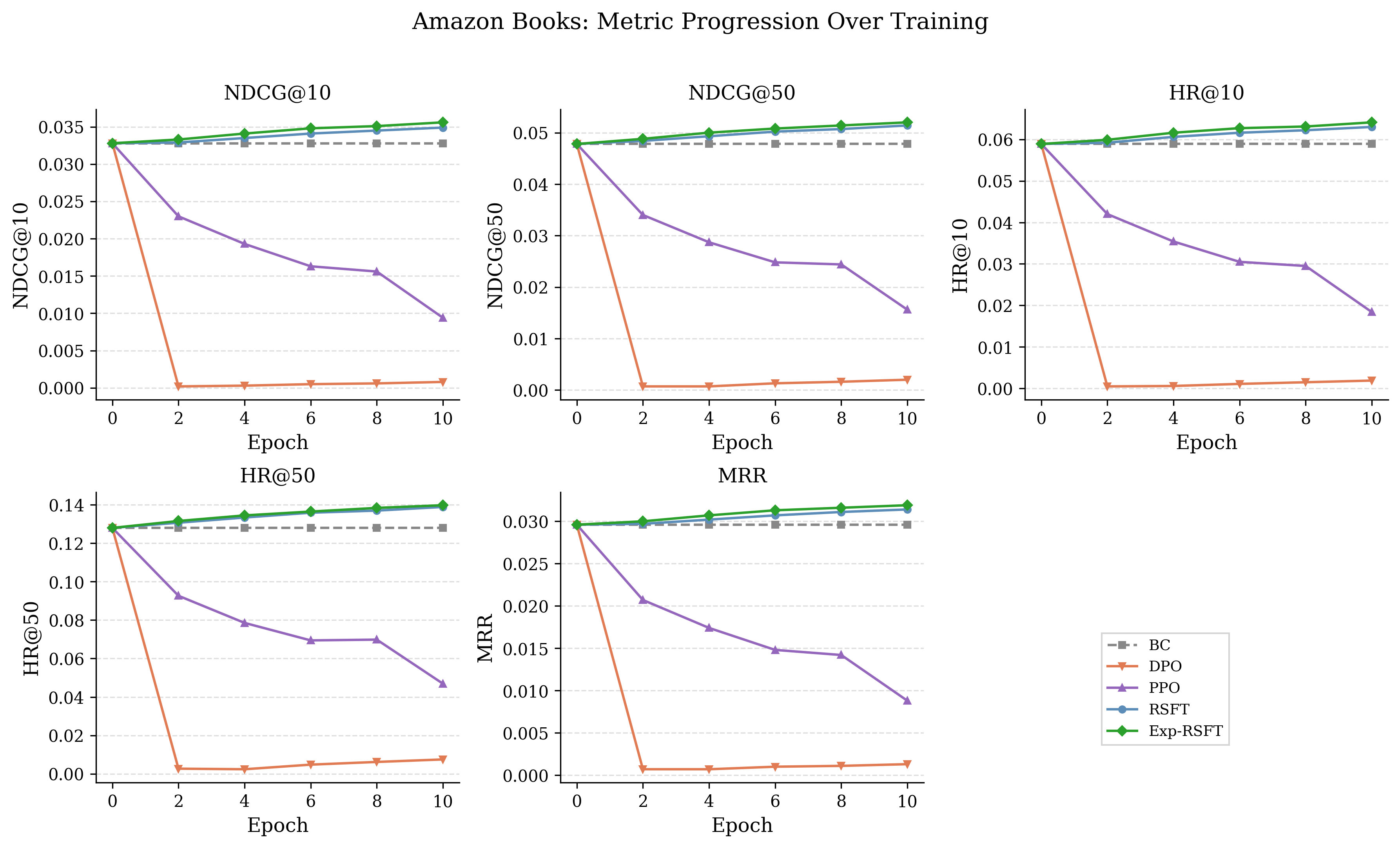}
  \caption{All metrics across training epochs for all algorithms on Amazon Books.}
  \label{fig:metrics_all_amazon-books}
\end{figure*}

\begin{figure*}[t]
  \centering
  \includegraphics[width=0.72\textwidth]{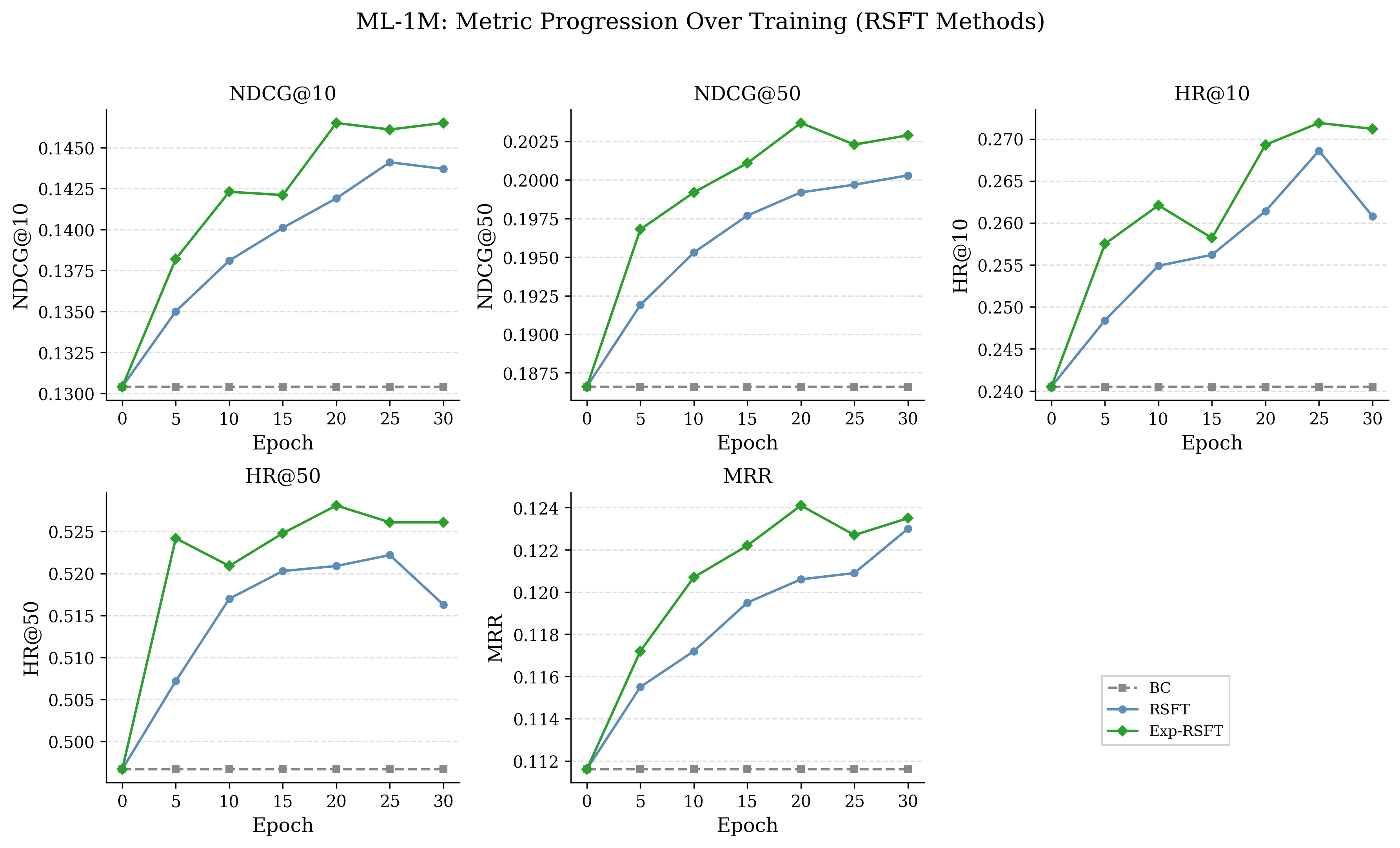}
  \caption{RSFT versus Exp-RSFT across training epochs on ML-1M.}
  \label{fig:metrics_zoom_ml-1m}
\end{figure*}

\begin{figure*}[t]
  \centering
  \includegraphics[width=0.72\textwidth]{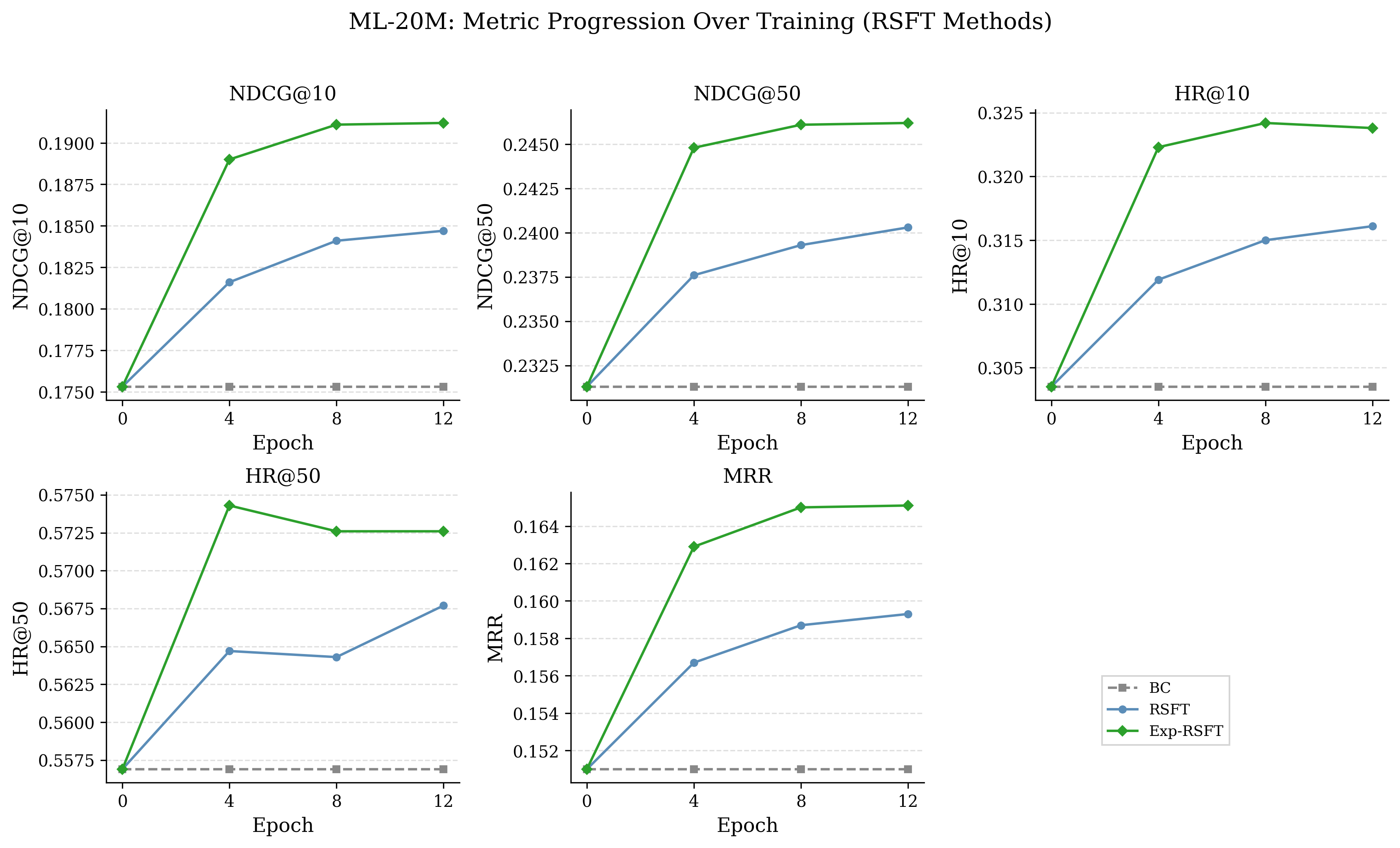}
  \caption{RSFT versus Exp-RSFT across training epochs on ML-20M.}
  \label{fig:metrics_zoom_ml-20m}
\end{figure*}

\begin{figure*}[t]
  \centering
  \includegraphics[width=0.72\textwidth]{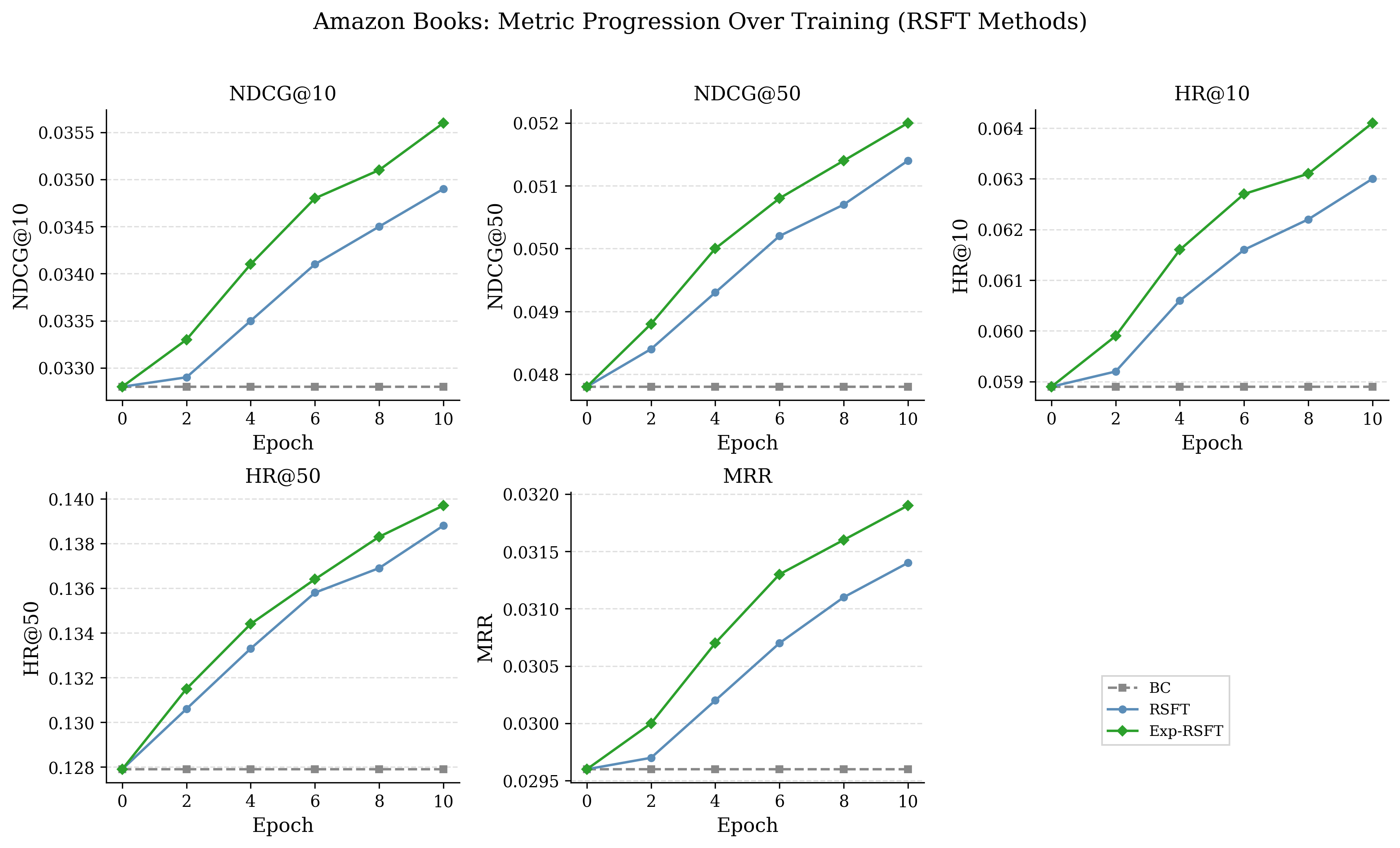}
  \caption{RSFT versus Exp-RSFT across training epochs on Amazon Books.}
  \label{fig:metrics_zoom_amazon-books}
\end{figure*}

\end{document}